\begin{document}

\title{Model Averaging for Support Vector Machine by Cross-Validation}

\author{\name Jiahui Zou \email zoujiahui@cueb.edu.cn \\
       \addr School of Statistics\\
       Capital University of Economics and Business\\
       Beijing 100070, China
       \AND
       \name Chaoxia Yuan \email chaoxiayuan@163.com \\
       \addr School of Statistics\\
       East China Normal University\\
       Shanghai 200062, China
        \AND
       \name Xinyu Zhang \email xinyu@amss.ac.cn \\
       \addr Academy of Mathematics and Systems Science\\
        Chinese Academy of Sciences\\
         Beijing 100190, China
       \AND
       \name Guohua Zou \email ghzou@amss.ac.cn \\
       \addr School of Mathematical Sciences\\
        Capital Normal University\\
         Beijing 100048, China
       \AND
       \name Alan T. K. Wan \email Alan.Wan@cityu.edu.hk \\
       \addr  Department of Management Sciences and School of Data Science\\
        City University of Hong Kong\\
        Kowloon, Hong Kong}

\editor{}

\maketitle

\begin{abstract}%   <- trailing '%' for backward compatibility of .sty file
Support vector machine (SVM) is a well-known statistical technique for classification problems in machine learning and other fields.
An important question for SVM is the selection of covariates (or features) for the model.
Many studies have considered model selection methods. As is well-known, selecting one winning model over others can entail considerable instability in predictive performance due to model selection uncertainties.  This paper advocates model averaging as an alternative approach, where estimates obtained from different models are combined in a weighted average.  We propose a model weighting scheme and provide the theoretical underpinning for the proposed method.  In particular, we prove that our proposed method yields a model average estimator that achieves the smallest hinge risk among all feasible combinations asymptotically.  To remedy the computational burden due to a large number of feasible models, we propose a screening step to eliminate the uninformative features before combining the models.   Results from real data applications and a simulation study show that the proposed method generally yields more accurate estimates than existing methods.
\end{abstract}

\begin{keywords}
cross-validation, model averaging, model selection, prediction, support vector machine
\end{keywords}

\section{Introduction}\label{sec:intr}

Support vector machine (SVM) \citep{1995Vapnik,Sch2003Learning} is a well-known statistical technique for handling classification problems in biology, machine learning, medicine and many other fields.  An important question for SVM is how to select the covariates (or features) of the model.  Many studies have attempted to address this question and come up with a variety of methods.   \cite{Jason2000Feature} proposed a scaling method. \cite{Guyon2002Gene} suggested a recursive feature elimination procedure.  Some authors have considered regularisation methods within the context of SVM.  For example, \cite{Bradley1998Feature}, \cite{Ji20041} and \cite{Marten2011Support} investigated the properties of a $L_1$ penalised SVM; \cite{Li2006The} considered SVMs with $L_1$ and $L_2$ penalties; \cite{Zou2008The} considered the $L_\infty$ penalised SVM in the presence of prior knowledge in  the group information of features;
\cite{Hao2006Gene} and \cite{Becker2011Elastic} suggested a non-convex penalty in the application of gene selection; and
\cite{Park2012Oracle} and \cite{SVMselect2014}
investigated the oracle property of the SCAD-penalised SVM with a fixed number of covariates under a class of non-convex penalties. As well, \cite{Zhang2016aconsistent} developed a consistent information criterion for SVM with divergent-dimensional covariates. \cite{Claeskens2008An} proposed an information criterion called $\text{SVMIC}_L$ for feature selection in SVM, and \cite{Zhang2016aconsistent} proved that this criterion results in model selection consistency when the number of features is finite, and developed a modified version of the criterion that achieves model selection consistency when the number of features diverges at an exponential rate of the sample size.  The above provides an overview of the current state of the literature even though the list is by no means exhaustive.

A major critique of model selection, is that as this practice focuses on one single model and ignores all others, it can lead to decisions riskier than warranted.  Model averaging is an alternative approach that has been proposed to address the above issue.  Unlike model selection that commits to one champion model and discounts all others in the pool, model averaging combines different candidate models by an appropriate weighting scheme. A major takeaway from existing studies, is that model averaging often results in improved predictive accuracy and is a more robust strategy than model selection \citep{hansen2007least,wan2010least}.

Of the model averaging methods, Bayesian model averaging (BMA) is a common choice of technique as it is straightforward to implement. See \cite{hoeting.madigan.ea:1999} for a review of the BMA methodology.  The major challenge confronting BMA is choosing subjective priors.  Frequentist model averaging (FMA), which is a more recent vintage, avoids the difficulty of specifying prior probabilities as it is entirely data-driven.  Many weight choice methods within the FMA paradigm, including
the smoothed information criteria \citep{Buckland1997Model,claeskens.croux.ea:2006}, optimal weighting \citep{hansen2007least,Zhang2016Optimal,zhang2019Parsimonious}, and adaptive weighting \citep{Yuan2005Combining,ZhangAdaptively} have been proposed in a variety of contexts.  To the best of our knowledge, no study has considered model averaging within the context of SVM, and the purpose of this paper is to take steps in this direction.

%{\blue
%although some estimation methods developed in the distributed computing area arguably share the spirit of model averaging.}

Our contribution is two-fold.   First, we develop a weight choice criterion and prove that the resultant model average estimator is asymptotically optimal
in the sense of achieving the smallest hinge risk among all feasible combinations asymptotically.  It is worthwhile to mention that our analysis allows the hinge loss to be non-smooth as well as asymmetric.
Second, to remedy the computational burden due to a large number of feasible models, we propose a screening step to eliminate the uninformative features before combining the models.

The rest of the paper is organised as follows. Section \ref{sec:modelsetup} describes the model setup and introduces the FMA method. Section \ref{sec:theory} presents results on the theoretical properties of the resultant FMA estimator. In Section \ref{sec:simulation}, we evaluate the usefulness of the proposed procedure in finite samples. Section \ref{sec:data} applies the method to three real data sets.  Proofs of results are relegated to the Appendix.

\section{Model setup and estimation method}\label{sec:modelsetup}
\subsection{SVM and model averaging}
Consider a random sample $\Dn=\{(\x_1, y_1),(\x_2, y_2),...,(\x_n, y_n)\}$, where $y_i\in\{-1,+1\}$ and each of $(\x_i,y_i)$ is independently drawn from an identical distribution.
Denote $\x_i=(1,x_{i1},x_{i2},...,x_{ip})\tt\in\mathcal{R}^{p+1}$, and $\x_i^+=(x_{i1},x_{i2},...,x_{ip})\tt$.  Write $\bbeta=(\beta_0,\beta_1,...,\beta_p)\tt\in \mathcal{R}^{p+1}$ and
$\bbeta^+=(\beta_1,\beta_2,...,\beta_p)\tt\in\mathcal{R}^{(p+1)}$, where $\bbeta^+$ is the coefficient vector corresponding to $\x_i^+$.
The objective of linear SVM is to find a hyperplane, defined by $\x\tt\bbeta=0$, to draw a boundary between $y=1$ and $y=-1$.  This hyperplane is commonly estimated by solving the following optimisation problem \citep{Trevor2005The}:
\begin{align}\label{eq:model}
\min_{\scriptsize\bbeta}\left\{\onen\sumin \left(1-y_i\x_i\tt\bbeta\right)_++\frac{\lambda_n}{2}\left\|\bbeta^+ \right\|^2\right\},
\end{align}
where
 $\|a\|$ is the Euclidean norm operator of the vector $a$, $(1-t)_+=\max(1-t,0)$ is a hinge loss function that can be asymmetric or symmetric, and smooth or non-smooth, and $\lambda_n$ is a tuning parameter.

One usually tackles the problem concerning the uncertainty in $\x$ by model selection, as discussed in Section \ref{sec:intr}.  Here, we consider the alternative strategy of model averaging that combines models with different covariates.
It is assumed that each model contains a minimum of one covariate in addition to the intercept term.  Hence there exist a maximum of $2^{p}-1$ feasible candidate models.   Some of these models may be uninformative and one may consider removing them before averaging.  Without loss of generality, assume there are $S_n$ models to be combined.  Clearly, it is required that $S_n \leq 2^{p}-1$.
Let $s\in\{1,...,S_n\}$, and denote $\mathcal{M}_s=\{j_1,...,j_{p_s}\}\subset\{1,...,p\}$ as the set consisting of the indices of elements of $\x_{\s,i}$.
For the $sth$ model,  $\x_{\s,i}=(1,x_{\s i,j_1},...,x_{\s i,j_{p_s}})\tt$ and $\bbeta_\s\tt=(\beta_{\s 0},\beta_{\s j_1},...,\beta_{\s j_{p_s}})$.  The estimator of $\bbeta_\s$ is obtained by solving the optimisation problem described in (\ref{eq:model}), replacing $\bbeta$ by $\bbeta_\s$ and $\x_i$ by $\x_{\s i}$ everywhere, yielding
\begin{align}\label{eq:candidatemodel}
\hbeta_\s=\arg\min_{\scriptsize\bbeta_\s}\left\{\onen\sumin \left(1-y_i\x_{\s i}\tt\bbeta_\s\right)_++\frac{\lambda_n}{2}\left\|\bbeta^+_\s \right\|^2\right\}.
\end{align}

Following  \cite{Koo2008Abahadur}, we denote $\bbeta^*_\s$ as the ``quasi-true'' parameter\footnote{The parameter that minimises the population hinge loss is the ``quasi-true" parameter when the working model is not identical to the true data generating process. If the two are identical, the ``quasi-true" parameter is the true parameter. } that minimises the population hinge loss. That is,
\begin{align}
\bbeta^*_\s=\arg\min_{\scriptsize \bbeta_\s}\Exp(1-y\x_\s\tt\bbeta_\s)_+.
\end{align}
To facilitate analysis, let $\Pi_s$ be a $(p+1)\times (p_s+1)$ dimensional selection matrix consisting of 1 or 0 and write $\hbeta_s=\Pi_s\hbeta_\s$.  For example, if the covariate vector of $sth$ model is $\x_ {\s,i}=(1,x_{\s,i3})\tt$, and
 $\hbeta_\s=(1,2)\tt$, then $\Pi_s=\begin{pmatrix}
                             1 & 0 & 0 & 0 \\
                             0 & 0 & 0 & 1
                           \end{pmatrix}\tt$
and $\hbeta_s=(1,0,0,2)\tt$.
The model average estimator of $\bbeta$ is a weighted sum of  $\hbeta_s$'s, $s=1,\cdots,S_n$, i.e.,
\begin{align}\label{eq:4}
\hbeta(\w)=\sums w_s\hbeta_s,
\end{align}
where $\w=(w_1,...,w_{S_n})\tt$ is the weight vector belonging to the set $\calW=\{\w\in[0,1]^{S_n}: \sums w_s=1 \}$. We label $\hbeta(\w)$ as the SVM model average (SVMMA) estimator.

\subsection{Weight choice criterion}
As discussed above, the standard SVM approach derives the coefficient estimates by minimising the hinge loss associated with a given model.  Analogously, when more than one model is involved, the hinge loss may be modified to be
\begin{align} \label{fmahinge}
\onen\sumin \left(1-y_i\x_i\tt\hbeta(\w) \right)_+.
\end{align}
The purpose is to find a weight vector $\w$ to be used in (\ref{eq:4}) such that the resultant SVMMA estimator yields an optimal property.  Clearly, the hinge loss in (\ref{fmahinge}) favours bigger models, and if one minimises (\ref{fmahinge}) directly, over-fitting becomes a distinct possibility. To reconcile this issue, we focus on the following alternative out-of-sample risk as an alternative:
\begin{align} \label{outhinge}
R_n(\w)&=\Exp\left\{(1-\tilde{y}\tilde{\x}\tt\hbeta(\w))_+ |\Dn\right\},
\end{align}
where $(\tilde{y},\tilde{\x})$ is an independent copy from the distribution of $\Dn$.  It is instructive to note that the expectation in (\ref{outhinge}) is computed with respect to $(\tilde{y},\tilde{\x})$, but $\hbeta(\w)$ is estimated based on $\Dn$.
As $\Dn$ is unknown, we minimise the following estimator of (\ref{outhinge}) to obtain $\w$:
\begin{align}\label{eq:7}
Q_m(\w)=\frac{1}{m}\sum_{i=1}^m \left(1-\tilde{y}_i\tilde{\x}_i\tt\hbeta(\w) \right)_+.
\end{align}
where $\{(\tilde{y}_i,\tilde{\x}_i)\}_{i=1}^m$ are independent copies from the same distribution of $\Dn$.  In addition, we divide the data into a training sample
and a validation or test sample.
This is the cross-validation (CV) approach that allows the estimated model to be tested on new data. Let $n$ and $n_{test}$ be the size of the training sample and test sample respectively and
 $J$ the number of folds associated with the CV approach such that the number of observations in each block is $[n/J]$, where $[g]$ is the truncated integer value of $g$. Denote $\calA(j)=\{(j-1)M_n+1,(j-1)M_n+2,...,jM_n\}$, $|\calA(j)|$ the cardinality of $\calA(j)$, and $\calB(j)=\{1,2,...,n\}\bigcap \calA(j)^c$.   The CV approach is based on the criterion
\begin{align}
\CV_n(\w)&=\frac{1}{n}\sum_{j=1}^{J} \sum_{i\in \calB(j) }\left(1-y_i\x_i\tt \wbeta^{[-j]}(\w)\right)_+,\label{eq:CV}
\end{align}
where
\begin{align*}
\wbeta^{[-j]}(\w)
=\sums w_s \Pi_s\wbeta_\s^{[-j]}
\text{\quad and\quad }
\wbeta_\s^{[-j]}=\arg\min_{\scalebox{0.7}{$\bbeta_\s$}} \frac{1}{|\calA(j)|}\sum_{i\in \calA(j)}(1-y_i\x_{\s,i}\tt\bbeta_\s)_+.
\end{align*}
Some explanations of the CV criterion and the above notations are in order.  We denote $\wbeta_\s^{[-j]}$ as the estimator obtained with the $j$th sub-sample of data removed from the training sample; $\wbeta_\s^{[-j]}$ is obtained by minimising the hinge loss averaged over $|\calB(j)|$ observations. The integrated estimator $\wbeta^{[-j]}(\w)$ is obtained by combining $\wbeta_\s^{[-j]}$ obtained from each of the $S_n$ models.  To evaluate the performance of $\wbeta^{[-j]}(\w)$ on the test data, we calculate the associated hinge loss on the $j$th sub-sample of data that contains $|\calB(j)|$ observations. We repeat this process for all $J$ sub-samples and the optimal weight vector is obtained by a minimisation of (\ref{eq:CV}).

Although model averaging can often deliver more precise estimates and reduce bias compared to model selection, it is computationally intensive, especially  when the data dimension is high. With $p$ covariates, there are $2^p-1$ potential candidate models, and when $p$ is large, it is difficult if not impossible to combine all models. We mitigate this problem by screening out the uninformative covariates before combining the models.  Our model screening procedure entails sorting the covariates under $L_1$ penalty. We order the covariates according to the sequence in which the estimated coefficient of the covariate becomes non-zero as the penalty parameter decreases.  Based on this ordering, we construct $S_n$ candidate models by including
one extra covariate successively for each new model such that a given model is always nested within the next smallest model.   The details are described in Algorithm \ref{alg:weights}.
 \cite{zhang2019Parsimonious} considered a similar model screening method.  After constructing the $S_n$ models, we calculate the model weights by the CV criterion in \eqref{eq:CV}, and combine the models in accordance with \eqref{eq:4}.  For the choice of $J$, we find that it generally has little impact on the results and we suggest choosing $J=5, 10$.  Finally, we follow the steps of Algorithm \ref{alg:prediction} to predict $y$.
\\

\begin{algorithm}
\caption{Calculate $\hat{\w}$}
\label{alg:weights}
\begin{algorithmic}
\REQUIRE $D=\{(\x_1,y_1),(\x_2,y_2),...,(\x_n,y_n)\}$\hspace{\fill} \#The first element of $\x_i$ is 1.\\
\hspace{-0.42cm}\textbf{Step 1:} Pre-screening
\REQUIRE $a$, $b$, $L$, $\mathcal{C}_0=\emptyset$, $I=$list(), $J$, $S_n$, $M_n=[n/J]$\\
 \hspace{\fill}{
\# The $L_1$-penalty parameter $\lambda$ is in $[a,b]$ and we let $a$ be 0.001. \\
 \hspace{\fill} \# $J$ is the number of folds in CV and $S_n$ is the number of candidate models.}
\FOR {$l=0$ to  $L$}
\STATE $\lambda=a+l(b-a)/L$
\STATE $\breve{\bbeta}=\arg\min_{\bbeta} n^{-1}\sum_{i=1}^{n}(1-y_i\x_i\tt\bbeta)_++\lambda\|\bbeta^+\|_1$
\STATE $\mathcal{C}_{l+1}=\{i|\breve{\beta}_i^+=0,i=1,2,...,p\}$\hspace{\fill}\# $\breve{\bbeta}=(\breve{\beta}_0,\breve{\bbeta}^{+\mathrm{T}})\tt, \breve{\bbeta}^+=(\breve{\beta}_1^+, \breve{\beta}^+_2,..., \breve{\beta}_p^+)\tt$.
\STATE $I=I.extend(list(\mathcal{C}_{l+1}\backslash\mathcal{C}_{l}))$\hspace{\fill} \# { I.extend(.) means adding all the elements of a list to the tail of I}
\ENDFOR\\
\STATE $I=I.extend(list(0))$\hspace{\fill}{ \# Add the index of 1 in $\x_i$}
\STATE $I=I.reverse()$\hspace{\fill} \# { I.reverse(.) means reversing the elements of I}\\
\hspace{-0.42cm}\textbf{Step 2:} Estimate each candidate model
%\REQUIRE
\FOR{$j=1$ to $J$}
\STATE
 $\calA(j)=\{(j-1)M_n+1,(j-1)M_n+2,...,jM_n\}$\\ $\calB(j)=\{1,2,...,n\}\bigcap \calA(j)^c$\\
    \FOR {$s=1$ to $S_n$}
    \STATE $\hbeta_\s=\arg\min_{\scriptsize\bbeta_\s}\left\{(n-M_n)^{-1}
    \sum_{i \in \calA(j)} (1-y_i
    {\x_{i,I[0:s]}}\tt\bbeta_\s)_++\{2(n-M_n)\}^{-1}\|\bbeta^+_\s \|^2\right\}$
    \hspace{\fill} \# $\x_{i,I[0:s]}$ means collecting the elements of $\x_i$ with indices in $I[0:s]$
    \STATE $\hbeta^{(j)}_s=\Pi_s\hbeta_\s$\hspace{\fill}\#$\Pi_s$ is the selection matrix
    \ENDFOR\\
\ENDFOR\\
\hspace{-0.42cm}\textbf{Step 3:} Calculate $\hat{\w}$
\STATE $\hat{\w}=\arg\min_{\w}{n}^{-1}\sum_{j=1}^{J} \sum_{i\in \calB(j) }\left(1-y_i\x_i\tt \sums w_s\hat{\bbeta}^{(j)}_s\right)_+$
%\ENDFOR
\RETURN $\hat{\w}$
\end{algorithmic}
\end{algorithm}

\begin{algorithm}
\caption{Prediction}
\label{alg:prediction}
\begin{algorithmic}
\REQUIRE $\hat{\w}$, $\x_{\text{new}}$,$I$, $S_n$, $D=\{(\x_1,y_1),(\x_2,y_2),...,(\x_n,y_n)\}$\\
\STATE Estimate each candidate model based on $I$, $S_n$ and $\Dn$ to obtain $\{\hbeta_1,\hbeta_2,...,\hbeta_{S_n}\}$\\
\RETURN $\text{sign}(\x_{\text{new}}\tt\sums \hat{w}_s \hat{\bbeta}_s)$
\end{algorithmic}
\end{algorithm}

\section{Theoretical justification}\label{sec:theory}
\subsection{Notations and technical conditions}
This section is devoted to an investigation of the theoretical properties of the proposed model averaging strategy. Denote $L_\s(\bbeta_\s)=\Exp(1-y\x_\s\tt\bbeta_\s)_+$, $\J_s(\bbeta_\s)=-\Exp\left(\mathbf{1}_{\{1-y\x_\s\tt \scalebox{0.5}{\bbeta}_\s\geq 0\}}y\x_\s \right)$ and $\Hess_s(\bbeta_\s)=\Exp\{\delta(1-y\x_\s\tt
\bbeta_\s)\x_\s\x_\s\tt \}$, where $\mathbf{1}_{\{\cdot\}}$ is the indicator function and $\delta(\cdot)$ is the Dirac delta function, $s=1,2,...,S_n$.  \cite{Koo2008Abahadur} showed that under some regular conditions, $\J_\s(\bbeta_\s)$ and $\Hess_\s(\bbeta_\s)$ possess the mathematical properties of  the gradient and Hessian matrix of $L_\s(\bbeta_\s)$ respectively. In addition, we let $f_+$ and $f_-$ be the densities of $\x^+\in\mathcal{R}^p$ conditional on $y=1$ and $y=-1$, respectively.  Note that the dimension of the covariates $\x_{\s,i}$ used in the $sth$ candidate model is $p_s+1$. We therefore write $\pmax=\max_{1\leq s\leq S_n} p_s+1$ as the dimension of the largest candidate model.

Our proofs of theoretical results require the following conditions:

\begin{condition}
 \label{con:1i}$f_+$ and $f_-$ are continuous and have the same common support in $\mathcal{R}^p$.
 \end{condition}
 \begin{condition}
 \label{con:1ii}%{\blue $|x_j|\leq C<\infty$ for some positive constant $C$ and $1\leq j\leq \pn.$}\\
      There is a constant $C_1>0$ such as $\supi\max_{1\leq j\leq p} |x_{ij}|<C_1$.
\end{condition}
\begin{condition}
 \label{con:1ii2} For $s=1,2,...,S_n$, the $sth$ candidate model has a unique ``quasi-true" parameter $\sbeta_\s$ and there exists a constant $C_2>0$ such that $\|\sbeta_
     \s\|\leq C_2 \sqrt{p_s}$.
\end{condition}
\begin{condition}
  \label{con:1iv} The densities of $\x_{(s), i}\tt\sbeta_{(s)}$ conditional on $y=1$ and $y=-1$ are uniformly bounded away from zero, and have a uniform upper bound $C_3$ (a positive constant) at the neighborhood of $\x_{(s),i}\tt\sbeta_{(s)}=1$ and $\x_{(s),i}\tt\sbeta_{(s)}=-1$ respectively.
\end{condition}
\begin{condition}
\label{con:1viii} For $s=1,2,...,S_n$, there exists a positive constants $c_0$ %and $c_1$
       such that $\infs\\\lambda_{\min}\{\Hess_s(\sbeta_\s)\}\geq c_0$,
      where $\lambda_{\min}(\cdot)$ is the smallest eigenvalue of a matrix $(\cdot)$.
\end{condition}
\begin{condition}
\label{con:1v} $\pmax=O(n^\kappa)$ for some constant $\kappa\in (0,1/5)$.
\end{condition}

Condition \ref{con:1i}, which is adopted from \cite{Koo2008Abahadur}, ensures that $\J_s(\bbeta_\s)$ and $\Hess_s(\bbeta_\s)$ are well-defined.   Condition \ref{con:1ii} facilitates the measurement of the order of $\x_{\s,i}$.  This is a common condition in high-dimensional studies \citep[e.g.,][]{Wang2012Quantile, Lee2014Model}. Condition \ref{con:1ii2} is a mild condition that guarantees the existence of the ``quasi-true" parameter. Similar conditions can be found in \cite{White1982}, \cite{Zhang2016Optimal} and \cite{Ando2017}. Condition \ref{con:1iv} assumes that as the sample size increases, there is information around the non-differentiable points of the hinge loss function to enable the boundary of hyperplane to be identified - note that the observations that satisfy $\x_{(s),i}\tt\sbeta_{(s)}=1$ or $\x_{(s),i}\tt\sbeta_{(s)}=-1$ are around the hyperplane's boundary, and are usually the non-differentiable points of the hinge loss function. We require the densities to be bounded away from zero so that there is information available to identify the hyperplane boundary. On the other hand, the data points should avoid being too concentrated near the boundary or non-differentiable points and accordingly we control the densities by the constant $C_3$.   This is similar to the condition  for model selection consistency of non-convex penalised SVM in high-dimension \citep{SVMselect2014}.  Condition \ref{con:1viii} assumes that the Hessian matrix is well-behaved and nonsingular when $\bbeta_\s$ is near the ``quasi-true" parameter.
Condition \ref{con:1v} allows the dimension of the covariates to diverge with the sample size and imposes a restriction on its rate of divergence.  Specifically, as
 the convergence of $\hbeta_\s$ is only related to the number of covariates of the $sth$ model, we impose restrictions on $\pmax$ and not on $p$. As $\pmax<p$,  Algorithm \ref{alg:weights} can handle the cases of $p>n$ and $p/n\to \infty$.

\subsection{Theoretical results}

\begin{lem}\label{lem:consist}
 Under Conditions \ref{con:1i}-\ref{con:1v}, if $S_n=O\{\exp(n^{\tau})\}$ for some constant $\tau\in (0, 1/2-3\kappa/2)$ and { $\lambda_n=O(\sqrt{n^{-1}\log(\pmax)})$}, we have
 \begin{align}
 \sups \left\|\hbeta_\s-\sbeta_\s \right\|={ O_p\left(\sqrt{\frac{\pmax\log(\pmax)}{n}}\right)},\label{eq:8}
 \end{align}
and
 \begin{align}
\max_{1\leq j\leq J} \sups \left\|\tbeta_\s^{[-j]}-\sbeta_\s \right\|={ O_p\left(\sqrt{\frac{\pmax\log(\pmax)}{n}}\right)}.\label{eq:9}
 \end{align}
\end{lem}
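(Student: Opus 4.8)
The plan is to establish (\ref{eq:8}) for the full-sample estimators $\hbeta_\s$ first, and then obtain (\ref{eq:9}) by running the identical argument on each leave-fold subsample. Fix a model $s$ and write the penalised empirical criterion $\hat{L}_{n,\s}(\bbeta_\s)=\onen\sumin(1-y_i\x_{\s i}\tt\bbeta_\s)_++\frac{\lambda_n}{2}\|\bbeta_\s^+\|^2$, so that $\hbeta_\s$ minimises $\hat{L}_{n,\s}$ while $\sbeta_\s$ minimises the population loss $L_\s$. Since $\hat{L}_{n,\s}$ is convex, it suffices to exhibit a radius $r_n$ of order $\sqrt{\pmax\log(\pmax)/n}$ such that, with probability tending to one \emph{uniformly in} $s$, the increment $\hat{L}_{n,\s}(\sbeta_\s+r_n\bu)-\hat{L}_{n,\s}(\sbeta_\s)$ is strictly positive for every unit vector $\bu$; convexity then forces $\hbeta_\s$ into the ball of radius $r_n$ about $\sbeta_\s$. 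I would therefore reparametrise $\bbeta_\s=\sbeta_\s+r_n\bu$ and lower-bound this increment on the sphere $\|\bu\|=1$, splitting it into a population drift, a centred empirical fluctuation, and a penalty contribution.

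For the population drift I would invoke \cite{Koo2008Abahadur}: under Conditions \ref{con:1i} and \ref{con:1iv} the population hinge loss $L_\s$ admits a gradient $\J_s$ and Hessian $\Hess_s$, with the first-order condition $\J_s(\sbeta_\s)=0$ holding at the quasi-true parameter. A second-order expansion together with Condition \ref{con:1viii}, which gives $\lambda_{\min}\{\Hess_s(\sbeta_\s)\}\geq c_0$, then yields $L_\s(\sbeta_\s+r_n\bu)-L_\s(\sbeta_\s)\geq \frac{c_0}{2}r_n^2(1+o(1))$ uniformly on the sphere as $r_n\to0$, where Conditions \ref{con:1ii} and \ref{con:1iv} are used to render the expansion remainder negligible relative to $r_n^2$ uniformly over $s$. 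This curvature is the engine that converts closeness of objective values into closeness of minimisers despite the non-smoothness of the sample loss.

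The stochastic term is $G_{n,\s}(\bu)=[(\hat{L}_{n,\s}-L_\s)(\sbeta_\s+r_n\bu)]-[(\hat{L}_{n,\s}-L_\s)(\sbeta_\s)]$ with the penalty removed. Because $t\mapsto(1-t)_+$ is $1$-Lipschitz and $\|\x_{\s i}\|\leq C_1\sqrt{\pmax}$ by Condition \ref{con:1ii}, each summand is Lipschitz in $\bu$ with constant of order $r_n\sqrt{\pmax}$; symmetrisation followed by the Ledoux--Talagrand contraction inequality bounds $\Exp\sup_{\|\bu\|=1}|G_{n,\s}(\bu)|$ by a multiple of $r_n\sqrt{\pmax/n}$, and Talagrand's concentration inequality upgrades this to a high-probability bound with exponentially small failure probability. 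The ridge term contributes at most $\lambda_n\|\sbeta_\s^+\|\,r_n+\frac{\lambda_n}{2}r_n^2$, which by $\|\sbeta_\s\|\leq C_2\sqrt{p_s}$ (Condition \ref{con:1ii2}) and $\lambda_n=O(\sqrt{n^{-1}\log(\pmax)})$ is again of order $r_n\sqrt{\pmax\log(\pmax)/n}$. Collecting terms, on the sphere the increment is at least $\frac{c_0}{2}r_n^2-Cr_n\sqrt{\pmax\log(\pmax)/n}$, which is strictly positive once $r_n=M\sqrt{\pmax\log(\pmax)/n}$ with $M$ large, proving (\ref{eq:8}) for a single model.

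To replace the single model by $\sups$ I would take a union bound over the $S_n=O\{\exp(n^\tau)\}$ models, after covering each sphere by an $\epsilon$-net whose cardinality $(3/\epsilon)^{\pmax}$ injects only an extra logarithmic-in-$\pmax$ factor that is absorbed into the rate. The exponential per-model concentration, combined with the $\pmax^{3/2}$-type factors coming from the net and the Bahadur remainder (with $\pmax=O(n^\kappa)$), survives the union bound precisely under $\tau<1/2-3\kappa/2$: this is the exponent bookkeeping that guarantees $\log S_n=O(n^\tau)$ is dominated by the deviation exponent. Equation (\ref{eq:9}) then follows verbatim, since each $\tbeta_\s^{[-j]}$ minimises a criterion of the same form over a subsample of size of order $n$ (namely $n-M_n$ with $M_n=[n/J]$), so the single-model bound holds with $n$ replaced by a constant multiple of $n$, and a further union bound over the $J$ folds is negligible compared with $S_n$. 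The principal obstacle I anticipate is handling the non-differentiability of the hinge loss simultaneously with uniformity over exponentially many models in a diverging dimension: one cannot Taylor-expand the empirical loss, so all curvature must be extracted from the population Hessian while the fluctuation is tamed by Lipschitz and empirical-process tools, and the delicate part is the exponent accounting, encoded in $\tau<1/2-3\kappa/2$, that keeps the union bound from overwhelming the concentration estimate.
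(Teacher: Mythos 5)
Your skeleton---convexity of the penalised criterion plus a strictly positive lower bound for the increment on a sphere of radius $r_n=M\sqrt{\pmax\log(\pmax)/n}$, with the increment split into population drift, centred fluctuation and penalty---is the same as the paper's, and your treatment of the drift (Taylor expansion with Condition \ref{con:1viii}, matching the paper's \eqref{eq:271}) and of the ridge term (the paper's \eqref{eq:penalty}) is correct. The genuine gap is in the stochastic term. You keep the entire centred fluctuation $G_{n,(s)}$ in one piece and control it by symmetrisation, contraction and Talagrand's inequality using only the Lipschitz structure, i.e.\ the naive variance proxy $\mathrm{Var}\left\{(1-y_i\x_{(s),i}^{\rm T}(\bbeta^*_{(s)}+r_n\bu))_+-(1-y_i\x_{(s),i}^{\rm T}\bbeta^*_{(s)})_+\right\}=O(r_n^2\pmax)$. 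But concentration inequalities do not give ``exponentially small failure probability'' at the scale of the expectation bound: at confidence level $e^{-t}$ the deviation term is of order $\sigma\sqrt{t/n}$, and surviving a union bound over $S_n=O\{\exp(n^\tau)\}$ models forces $t\gtrsim n^\tau$, hence a deviation of order at least $r_n\sqrt{\pmax n^{\tau}/n}$. This must be dominated by the curvature gain $\frac{c_0}{2}r_n^2\asymp r_n\sqrt{\pmax\log(\pmax)/n}$, which would require $n^{\tau}\lesssim \log(\pmax)\asymp\log n$---false for every fixed $\tau>0$. So with the naive variance your exponent bookkeeping cannot close for any $\tau>0$, let alone produce the threshold $1/2-3\kappa/2$; your argument proves the rate for a single model (indeed for polynomially many), but not the uniformity over exponentially many models that the lemma asserts.

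The missing idea is the paper's variance-reduction step, which is where Condition \ref{con:1iv} enters essentially (you invoke it only for the population expansion). The paper first splits off the linear subgradient term $B_{s,n1}$, handled separately via Nemirovski's moment inequality (the paper's \eqref{eq:35}--\eqref{eq:341}); the remaining recentred quantity $g_{s,i}(\bu)$---the hinge increment minus its linearisation---is identically zero unless $|1-y_i\x_{(s),i}^{\rm T}\bbeta^*_{(s)}|\leq r_n\sup_i\|\x_{(s),i}\|$, i.e.\ unless the observation falls in a band of width $O(r_n\sqrt{p_s})$ around the hinge kink. By the bounded conditional densities of Condition \ref{con:1iv} this event has probability $O(r_n\sqrt{p_s})$, so $\sum_i\Exp g_{s,i}^2=O\{n^{-1/2}p_s^{3/2}\pmax^{3/2}\log^{3/2}(\pmax)\}$ (the paper's \eqref{eq:281} and \eqref{eq:22}) rather than the naive $nr_n^2\pmax$. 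Fed into Bernstein's inequality, this small variance yields an exponent of order $n^{1/2}p_{\min}^{1/2}\pmax^{-3/2}\log^{-3/2}(\pmax)\asymp n^{1/2-3\kappa/2}$, which dominates both $\log S_n=O(n^{\tau})$ and the covering-net entropy $\asymp\pmax\log n$ exactly when $\tau<1/2-3\kappa/2$ and $\kappa<1/5$; this is precisely where the lemma's restrictions come from. (Relatedly, your claim that the $\epsilon$-net ``injects only an extra logarithmic-in-$\pmax$ factor'' is wrong: its entropy is $\asymp\pmax\log n\asymp n^{\kappa}\log n$, and absorbing it is what Condition \ref{con:1v}'s bound $\kappa<1/5$ is for.) Without this decomposition and the kink-probability variance bound, or a substitute of comparable strength, the proposal does not prove the lemma as stated.
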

This lemma provides the speed in which the estimates $\hat{\bbeta}_\s$ and $\tbeta_\s^{[-j]}$ converge to  $\bbeta^*_\s$ uniformly for $s$.   As the covariates are contained in different candidate models and we estimate the parameters of each model independently, it suffices to explore the relationship between $n$ and $\pmax$ instead of  $n$ and $p$.

%Consider the following condition:
\begin{condition}\label{con:xi}
There exists a constant $\xi_0$ such that
\begin{align}
\liminf_{n\to\infty}\xi_n\geq\xi_0>0,
\end{align}
where $\xi_n=\infw R_n(\w)$.
\end{condition}
 This condition is readily satisfied because the hinge loss is nonnegative and typically the data cannot be distinctly separated by the linear hyperplane. Similar conditions are often used in other studies of model averaging, such as Condition (A.6) of \cite{hansen.racine:2012} and Condition (A3) of \cite{Ando2017}.%Certainly, if the separation is evident, we can easily choose a good
\begin{theorem}\label{thm:optimality}
Under Conditions \ref{con:1i} - \ref{con:xi}, if $S_n=O(n^\tau)$ for some constant $\tau\in (0,1-2\kappa)$, then
\begin{align}
\frac{R_n(\hat{\w})}{\inf_{\w\in\mathcal{W}} R_n(\w)}\to 1\label{eq:opt}
\end{align}
in probability, where $\hat{\w}$ is the optimal solution in \eqref{eq:CV}.
\end{theorem}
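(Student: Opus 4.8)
The plan is to compare the feasible criterion $\CV_n(\w)$ and the target risk $R_n(\w)$ through a common deterministic anchor. Set $\bbeta^\ast(\w)=\sums w_s\Pi_s\sbeta_\s$ and define the limiting risk $R(\w)=\Exp(1-y\x\tt\bbeta^\ast(\w))_+$. By Lemma \ref{lem:consist} every candidate estimator is uniformly consistent for its own quasi-true parameter, so $\hbeta(\w)$ and each $\wbeta^{[-j]}(\w)$ collapse onto $\bbeta^\ast(\w)$ uniformly in $\w$, making $R(\w)$ the common limit of $R_n(\w)$ and of $\CV_n(\w)$, up to a fixed positive scale induced by the $J$-fold structure. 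The whole argument is then a sandwich: I will establish (i) $\supw|R_n(\w)-R(\w)|=o_p(1)$ and (ii) that $\hat\w$ is asymptotically optimal for $R(\cdot)$, i.e. $R(\hat\w)\le\infw R(\w)+o_p(1)$. Combining (i) and (ii) gives $R_n(\hat\w)\le R(\hat\w)+o_p(1)\le\infw R(\w)+o_p(1)\le\infw R_n(\w)+o_p(1)=\xi_n+o_p(1)$, while $R_n(\hat\w)\ge\xi_n$ is immediate; dividing by $\xi_n$ and using $\xi_n\ge\xi_0>0$ from Condition \ref{con:xi} yields \eqref{eq:opt}.

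For step (i) I would exploit that the hinge loss is $1$-Lipschitz, so $|R_n(\w)-R(\w)|\le\Exp|\tilde\x\tt(\hbeta(\w)-\bbeta^\ast(\w))|$. Writing $\hbeta(\w)-\bbeta^\ast(\w)=\sums w_s\Pi_s(\hbeta_\s-\sbeta_\s)$ and bounding the contribution of each model by $\|\x_{(s)}\|\le C_1\sqrt{\pmax}$ (Condition \ref{con:1ii}), convexity of the weights reduces the bound to $C_1\sqrt{\pmax}\,\sups\|\hbeta_\s-\sbeta_\s\|$, which by \eqref{eq:8} is $O_p(\pmax\sqrt{\log\pmax/n})=o_p(1)$ since $\kappa<1/5$, uniformly in $\w$. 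The identical Lipschitz manipulation, now invoking the cross-validated rate \eqref{eq:9}, controls $\supw|\CV_n(\w)-\CV_n^{0}(\w)|$, where $\CV_n^{0}(\w)$ denotes $\CV_n(\w)$ with every $\wbeta^{[-j]}(\w)$ replaced by $\bbeta^\ast(\w)$; this difference is likewise $o_p(1)$.

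The main obstacle is the uniform law of large numbers $\supw|\CV_n^{0}(\w)-\Exp\,\CV_n^{0}(\w)|=o_p(1)$ over the simplex $\calW$, whose dimension $S_n$ diverges. The indexing is affine in $\w$, each summand $(1-y\x\tt\bbeta^\ast(\w))_+$ is bounded by $1+C_1C_2\pmax=O(\pmax)$ (Conditions \ref{con:1ii}, \ref{con:1ii2}), and $\w\mapsto(1-y\x\tt\bbeta^\ast(\w))_+$ is Lipschitz with constant $O(\pmax)$. I would cover $\calW$ by a $\delta$-net of cardinality at most $(3/\delta)^{S_n}$, bound the deviation at each net point by a McDiarmid/Hoeffding inequality (the $J$ folds only inflate the bounded-difference constant by the fixed factor $J$), take a union bound, and absorb the discretisation error with $\delta=n^{-1}$. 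This delivers the uniform rate $O_p\big(\pmax\sqrt{S_n\log n/n}\big)=O_p\big(\sqrt{n^{2\kappa+\tau-1}\log n}\big)$, which is $o_p(1)$ exactly when $\tau<1-2\kappa$; this is precisely where the growth restriction $S_n=O(n^\tau)$ in the statement is consumed. Finally, since $\Exp\,\CV_n^{0}(\w)$ is a fixed positive multiple of $R(\w)$, its minimiser coincides with that of $R(\w)$; combined with the minimality of $\hat\w$ for $\CV_n$ and the two $o_p(1)$ approximations above, this establishes step (ii) and closes the sandwich.
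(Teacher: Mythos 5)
Your proposal is correct and takes essentially the same route as the paper's own proof: both arguments anchor everything at the quasi-true combined parameter $\sbeta(\w)$, control the plug-in discrepancies (your step (i) and your $\CV_n$-versus-$\CV_n^{0}$ step, corresponding to the paper's $\Omega_1$, $\Omega_3$ and $\CV-T_n$ terms) via the Lipschitz property of the hinge loss together with Lemma \ref{lem:consist}, handle the remaining empirical-process term by covering the simplex $\calW$ with $O(\exp\{S_n\log n\})$ points plus an exponential tail bound and a union bound—exactly where the restriction $S_n=O(n^\tau)$ with $\tau<1-2\kappa$ and $\pmax=O(n^\kappa)$ is consumed in both proofs—and finally convert uniform $o_p(1)$ errors into the ratio statement using Condition \ref{con:xi}. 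The only differences are organizational: you run a direct sandwich with absolute errors where the paper formalizes the same step as Lemma \ref{lem:svm2} with relative errors (equivalent since $R_n(\w)\geq \xi_0>0$ asymptotically), you bypass the paper's intermediate full-sample quantity $T_n(\w)$ by comparing $\CV_n$ directly to its quasi-true version, and you use Hoeffding/McDiarmid where the paper uses a truncation-plus-Bernstein argument whose truncation is in fact redundant given that Conditions \ref{con:1ii} and \ref{con:1ii2} already bound the summands deterministically by $O(\pmax)$.
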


This theorem shows that the SVMMA estimator is asymptotically optimal in the sense that it results in a hinge risk that is asymptotically identical to that obtained from the infeasible best possible model average estimator. In contrast to Lemma \ref{lem:consist}, we allow the order of $S_n$ to be $O(n^\tau)$ instead of $O\{\exp(n^\tau)\}$.

%It is reasonable because the accuracy of weight's estimate will decrease when the number of candidate models is too large.}

\section{A simulation study}\label{sec:simulation}
\subsection{Methods for comparison and evaluation criteria}
The purpose of this section is to examine the performance of the SVMMA estimator under sample sizes commonly encountered in practice via a simulation study. We include the following competing methods in the comparison:
\begin{itemize}
\item The SVM information criterion (SVMICL) and its modified high-dimensional version (SVMICH) introduced by \cite{Zhang2016aconsistent}, defined as
\begin{align}\label{eq:SVMICl}
\text{SVMICL}_s=\sumin (1-y_i\x_{\s,i}\tt\hbeta_\s)_++p_s\log(n)
\end{align}
and
\begin{align}\label{eq:SVMICh}
\text{SVMICH}_s=\sumin(1-y_i\x_{\s,i}\tt\hbeta_\s)_++\log^{3/2}(n) p_s
\end{align}
respectively.  The SVMICL and SVMICH select the model with the smallest value of their respective criterion.

\item The smoothed-SVMICL (SCL) and smoothed-SVMICH (SCH) methods which are model averaging counterparts to the SVM and SVMICL respectively.  The SCL and SCH weights for the $sth$ model are given by
\begin{align}
\text{SCL}_s=\exp\left(-\text{SVMICL}_s/n\right)\Big/\sumsn\exp\left(-\text{SVMICL}_s/n\right), s=1,2,...,S_n,
\end{align}
and
\begin{align}
\text{SCH}_s=\exp\left(-\text{SVMICH}_s/n\right)\Big/\sumsn\exp\left(-\text{SVMICH}_s/n\right), s=1,2,...,S_n,
\end{align}
respectively.
\item The bagging \citep{Breiman1996Bagging} (BAG) and adaboosting \citep{Freund1997A} (ADA) methods that belong to the class of ensemble learning methods, both being popular methods in machine learning research.   In the jargon of bagging and adaboosting, the candidate models are known as base learners. Bagging  combines the outputs from base learners. Adaboosting is a type of boosting developed for classification problems. Unlike model averaging, adaboosting places no constraint on the weights for the outcomes from base learners. While bagging focuses on reducing the variance, adaboosting emphasises bias reduction.
\item The uniform weighting method (UNIF) that assigns all the candidate models with an equal weight $1/S_n$.
\end{itemize}

We evaluate the performance of these methods by the following normalised hinge loss (NHL) and error rate on prediction (ER):
\begin{align}
\text{NHL}&=\frac{1}{D}\sum_{d=1}^{D}\frac{n_{\text{test}}^{-1}\sum_{i=1}^{n_{\text{test}}}\left(1-\tilde{y}^{(d)}_i \tilde{\x}^{(d)\mathrm{T}}_i\hbeta^{(d)}(\hat{\w}^{(d)})\right)_+}{ \min_{\w\in\calW}n_{\text{test}}^{-1}\sum_{i=1}^{n_{\text{test}}}\left(1-\tilde{y}^{(d)}_i\tilde{\x}^{(d) \mathrm{T}}_i\hbeta^{(d)}(\w)\right)_+},\label{eq:NHL}
\end{align}
and
\begin{align}
%\text{TPR}&=\sumin1(y_i=1 \& \hat{y}_i=1)/\sumin 1(y_i=1),\\
%\text{FPR}&=\sumin1(y_i=-1 \& \hat{y}_i= 1)/\sumin 1(y_i=-1),\\
\text{ER}&=\frac{1}{D n_{\text{test}}}\sum_{d=1}^{D}\sum_{i=1}^{n_{\text{test}}} 1(\tilde{y}^{(d)}_i\neq\hat{y}^{(d)}_i),
\end{align}
where at the $dth$ replication, $\tilde{y}_i^{(d)}$ and $\tilde{\x}_i^{(d)}$ are observations of $y_i$ and $\x_i$ obtained from the test sample respectively,  $\hbeta^{(d)}(\hat{\w}^{(d)})$, $\hat{\w}^{(d)}$, and $\hat{y}^{(d)}_i$ are the estimate of $\beta$, the weight vector calculated by a given method,  and a predicted value respectively, $D$ is the number of replications, and $n_{\text{test}}$ is the size of test sample. Note that for a model selection method, the elements of $\hat{\w}^{(d)}$ are either 1 or 0.    As bagging and adaboosting only deliver the outcome and not the estimate of $\bbeta$, we omit them in NHL comparisons.

\subsection{Simulation designs}
We consider the following two data generating processes (DGPs) similar to those used in the simulation study of \cite{Zhang2016aconsistent}.  They are related to linear discriminant analysis and Probit model respectively.

\textbf{DGP1:} $\Pr(Y=1)=\Pr(Y=-1)=0.5, \x|(Y=1)\sim N(\bmu,\bSigma), \x|(Y=-1)\sim N(-\bmu, \bSigma),
 \bmu=(\underbrace{0.6,\cdots,0.6}_q,0,...,0)\tt\in \mathcal{R}^p$, $ \bSigma=(\sigma_{ij})_{p\times p}$, where $\sigma_{ii}=1$ for $i=1,2,...,p$, $\sigma_{ij}=0.2$ for $1\leq i\neq j\leq p$,  and $q$ is a tuning parameter that measures the sparsity.

\textbf{DGP2:} $\Pr(Y=1)=\Phi(\x\tt\bbeta)$, $\x\sim N({\bf{0}}_p,\bSigma)$,  $\bbeta=(\underbrace{2,\cdots,2}_q,0,...,0)\tt$, $\bSigma=(\sigma_{ij})_{p\times p}$ with $\sigma_{ii}=1$ for $i=1,2,...,p$ and $\sigma_{ij}=0.4^{|i-j|}$ for $1\leq i\neq j\leq p$, where $\Phi(\cdot)$ is the cumulative distribution function (CDF) of the standard normal distribution.

For the evaluation of the method's robustness with respect to model misspecification due to missing covariates,  we consider the following two scenarios:
\begin{itemize}
  \item \textbf{S1}: { All candidate models are misspecified;}%some relevant covariates are omitted from the DGP;
  \item  \textbf{S2}: { There exists at least one candidate model correctly representing the underlying DGP.}%, and no relevant covariate has been omitted from the model.
\end{itemize}
We set the dimension $p$ in DGP1 and DGP2 to 1000.  For Scenario S1, we set $q=5$ and omit one distinct covariate from each model at the training stage; for S2,  we set $q=4$ and omit no variable. We apply the pre-screening step  in Algorithm \ref{alg:weights}, set $a=0.001$ and $b=10$ for the range of $\lambda$, and $L=50$ for ordering the covariates.  We choose $S_n$ such that the SVMMA results in the smallest risk as seen from the learning curves. For example, in Figure \ref{fig:Scenario11_learn_curves}, when $S_n$ is in $(75,160)$, the error rates of the SVMMA method for both the training and test samples are at their lowest. We set $S_n=100$ but other values within the range of $(75,160)$ can also be chosen. We let the candidate models constructed for the SVMMA method be the base learners for bagging and adaboosting. This is to facilitate a fair comparison between model averaging and the two ensemble learning methods.  We use a five-fold CV to calculate the model weights\footnote{We find that the number of folds generally has little effect on the performance of the method}, and set $D=200$ and the sizes of the training and test samples to $n=100,200,300,400$ and $n_{test}=10000$ respectively.
  % In each case, we generate 100 candidate models according to Algorithm \ref{alg:weights}.

\subsection{Simulation results}
The results are shown in  Figures \ref{fig:Scenario11_learn_curves}-\ref{fig:correct-ER}, where $(.)$-Train and $(.)$-Test in Figures \ref{fig:Scenario11_learn_curves}-\ref{fig:Scenario12_learn_curves} represent the error rates of a given method in the training and test samples respectively.    Figures \ref{fig:Scenario11_learn_curves}-\ref{fig:Scenario12_learn_curves} show that generally speaking, the performance of SVMMA, bagging and adaboosting methods all improve as the number of base learners or candidate models increases, although in the case of adaboosting the improvement is not remarkable. For a given number of candidate models or base learners, SVMMA produces the best results in the majority of cases; while bagging is able to produce similar results to SVMMA, it can do so only at the expense of a larger number of base learners.  This is a notable advantage of the proposed model averaging approach over bagging.   Judging from the learning curves in \ref{fig:Scenario11_learn_curves}-\ref{fig:Scenario12_learn_curves}, the superior SVMMA results are achieved by setting $S_n$ in (75, 160).

Figures \ref{fig:miss-NHL}-\ref{fig:miss-ER} show that in terms of NHL, in all parts of the parameter space, the SVMMA approach delivers the best estimates, often by a large margin, the SCL estimator always has the edge over the SCH estimator, and both the SCL and SCH estimators dominate their corresponding model selection counterparts, the SVMICL and SVMICH estimators. Although the UNIF method is a distant second best compared to the SVMMA estimator, it is superior to the SCL estimator in terms of NHL over a large part of the parameter space. Exceptions occur when $n$ is very large, where the SCL can sometimes be slightly more accurate.

Generally speaking, the above findings carry over to comparisons in terms of ER.
%, in which the comparisons also include the ensembling learning methods of adaboosting and bagging.
In all cases, adaboosting performs poorly, at a level comparable to the SVMICH estimator.  On the other hand, bagging can sometimes deliver marginally better estimates than the SVMMA estimator when $n$ is small.  That being said, bagging is an inferior strategy compared to most other methods  when $n$ is moderate to large, while SVMMA offers more stable performance and is the preferred estimator in the majority of cases.   The rather erratic performance of bagging can be explained by noting that a good bagging method relies on many independent learners \citep{Zhou2012Ensembling}. However, in our case, every two candidate models have at least one common covariate, and as the training sample size grows, the diversity of their outcomes reduces. On the other hand, in the case of the SVMMA estimator, the weights reflect the strength of the candidate models, and better candidate models will be given higher weights, resulting in improved accuracy.  It is instructive to note that a model average estimator with a weight constraint may be interpreted  as a shrinkage estimator that balances between bias and variance \citep{Jagannathan2003Risk}.
%
%For a given DGP, the results are mostly similar under Scenarios S1 and S2, but given S1 or S2, some major differences in results are observed under DGP1 and DGP2.  For example, comparing Figure \ref{fig:correct-ER}(b) and Figure \ref{fig:miss-ER}(b), other things being equal, the NHL's in Figure \ref{fig:correct-ER}(b) are lower than those in Figure \ref{fig:miss-ER}(b). This is because the training data loses a vital covariate under Scenario S1 which hinders the models' ability to learn.   \textbf{(I am not sure what you are getting at here.   I thought you were comparing the two DGPs for a given S1 or S2. Please rewrite the last sentence).
%}{\red I means "the NHL's in Figure \ref{fig:correct-ER}(b) are lower than those in Figure \ref{fig:miss-ER}(b)" is because the training data loses a vital covariate under Scenario S1 which hinders the models' ability to learn. This reason is just guessed by myself. Now, I think we only need to describe the facts rather explaining, so the sentence should be deleted. }
%%By and large, the simulation results corroborate our theoretical results in Theorem \ref{thm:optimality}

\section{Real data examples} \label{sec:data}
In this section, we apply the proposed methods to three  real data sets, all obtained from the University of California at Irvine's Machine Learning Repository. In all cases, we standardise the covariates and randomly split the data into a training subset and a test subset, containing $n$ and $n_{test}$ observations respectively.  Write $N=n+n_{test}$ and $n=\lfloor g\times N\rfloor$, and let  $ g\in\{40\%, 50\%, 60\%, 70\%, 80\%\}$. We repeat the process of splitting the data 200 times, and calculate NHL and ER on that basis. The candidate models or base learners are generated in the same manner as in the simulation analysis. Our computation of the SVMMA weights is based on a five-fold CV.

Our first data set, labelled as ``Sonar, Mines versus Rocks" \citep{Dua:2019}, is downloadable from \url{https://archive.ics.uci.edu/ml/datasets/Connectionist+Bench+(Sonar,+Mines+vs.+Rocks)}. The data contain 60 features and 208 observations. Our response variable is a binary variable that takes on 1 or 0 depending on whether the object is rock or mine.  The features are signal information from sonars.
More description can be found in \cite{Sonardata}.  We set the number of candidate models to 20 based on the learning curve in Figure \ref{fig:Sonar_learn_curves}.

The second data set, labelled as ``Ionosphere", is downloadable from \url{https://archive.ics.uci.edu/ml/datasets/ionosphere}. The sample size is 351 with 33 features. The response is a binary variable that takes on 1 if the radar returns show evidence of some type of structure in the ionosphere, and 0 otherwise. The features are the information about electrons recorded by radars.  More descriptions can be found in \cite{Ionosphere1989}. Judging from the behaviour of the learning curve in Figure \ref{fig:Ionosphere_learn_curves},  we set the number of candidate models to 30.

The third data set labelled as ``LSVT Voice Rehabilitation", is downloadable from  \url{https://archive.ics.uci.edu/ml/datasets/LSVT+Voice+Rehabilitation}.  The sample size is 126 with 309 features. The response is a binary variable equal to 1 if the subject is diagnosed with Parkinson's disease, and 0 otherwise.
The features are the biomedical speech signals.  More description can be found in \cite{2015Objective}.
 We set the number of candidate models to 30 according to the learning curve in Figure \ref{fig:LSVT_learn_curves}.

The learning abilities of model averaging and ensemble learning methods are compared in Figures \ref{fig:Sonar_learn_curves}, \ref{fig:Ionosphere_learn_curves} and \ref{fig:LSVT_learn_curves}, where the $x$- and $y$-axes represent the number of candidate models and the error rates respectively. In all cases, the sizes of the training and test samples are 80\% and 20\% of the full sample size respectively.  As is expected, for a given method, the ER associated with the training data is typically smaller than that associated with the test data.    The learning ability of a method is often evaluated in terms of the stability of the ER it produces under the test sample - a method is deemed to be good if it leads to a small ER that stablises at a small value of the number of candidate models (i.e, $x$-axis). From that point of view, model averaging clearly outperforms the two ensemble learning methods, as revealed in the figures. One distinct advantage of SVMMA compared with bagging and adaboosting, is that it requires fewer base learners to achieve the same results.

Figures \ref{fig:Sonar}, \ref{fig:Ionosphere} and \ref{fig:LSVT} provide a comparison of the SVMMA with the SCH, SCL, SVMICH, SVMICL and UNIF methods in terms of NHL and ER in the test sample when the size of training sample varies between 40\% and 80\% of the total sample size.  The results show that SVMMA is frequently the best performer in the pool with respect to both yardsticks.
In addition, the fact that the curves of NHL of SVMMA are very close to 1,  corroborates Theorem \ref{thm:optimality} because NHL is an estimate of $R_n(\hat{\w})/\inf_{\w\in\calW} R_n(\w)$. When NHL is near 1, it implies $R_n(\hat{\w})/\inf_{\w\in\calW} R_n(\w)\to 1$.  Similar to the results of the simulation analysis, the SCL and SCH model averaging methods invariably deliver superior performance to their model selection counterparts, the SVMICL and SVMICH methods.

\section{Concluding remarks}
This paper develops a model averaging method based on cross-validation for SVM.  We provided a weight choice criterion and showed that the resulting SVMMA estimator is asymptotically
optimal in the sense of achieving the lowest hinge loss among all feasible models. We also developed a model screening method based on $L_1$ penalty.
Our simulation and real data analysis shows that the proposed model average estimator performs well, compared with several other selection, averaging and ensemble learning methods. Work in progress by the authors develops optimal model averaging strategies for multi-kernel SVM models, which involve the choice of kernel as another layer of uncertainty.

\begin{figure}[h]
  \centering
  \includegraphics[width=0.8\textwidth]{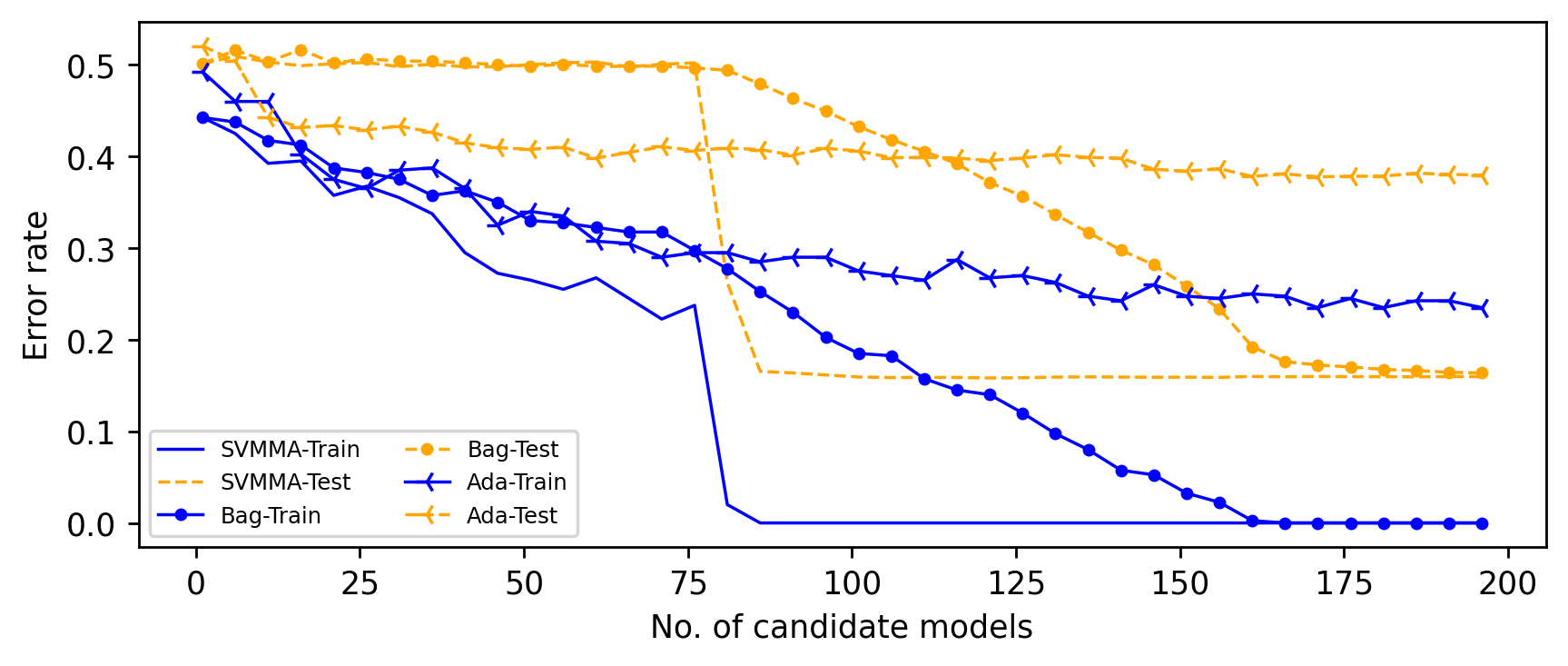}
  \caption{Learning curves under DGP1 and Scenario S1; $n$=400, $p$=1000.}\label{fig:Scenario11_learn_curves}
\end{figure}

\begin{figure}[h]
  \centering
  \includegraphics[width=0.8\textwidth]{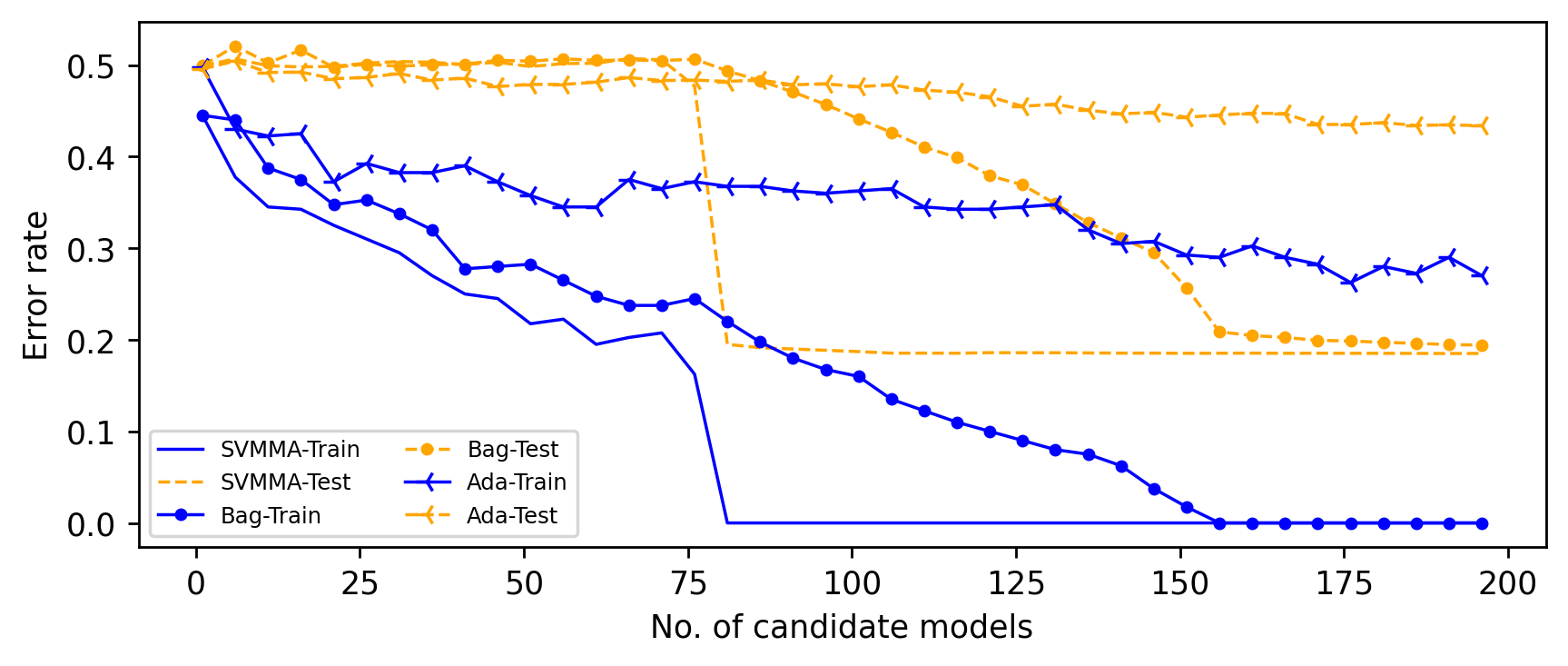}
  \caption{Learning curves under DGP2 and Scenario S1; $n$=400, $p$=1000.}\label{fig:Scenario12_learn_curves}

\end{figure}

\begin{figure}[h]
\centering
\begin{subfigure}{0.4\textwidth}
\includegraphics[width=\textwidth]{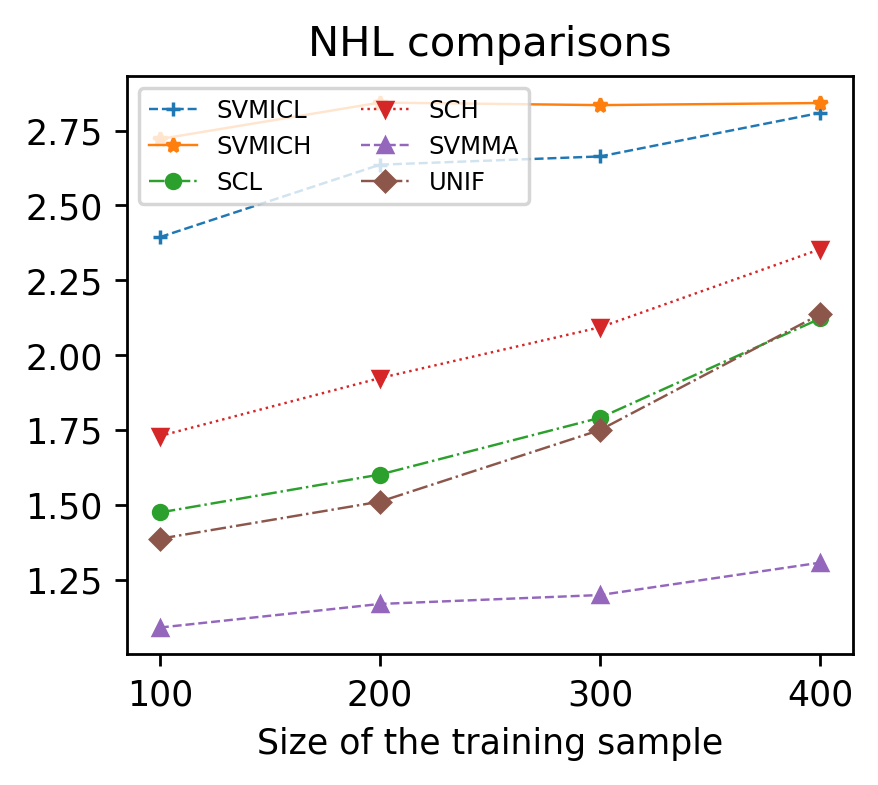}
\caption{DGP1}
\end{subfigure}
\qquad
\begin{subfigure}{0.4\textwidth}
\includegraphics[width=\textwidth]{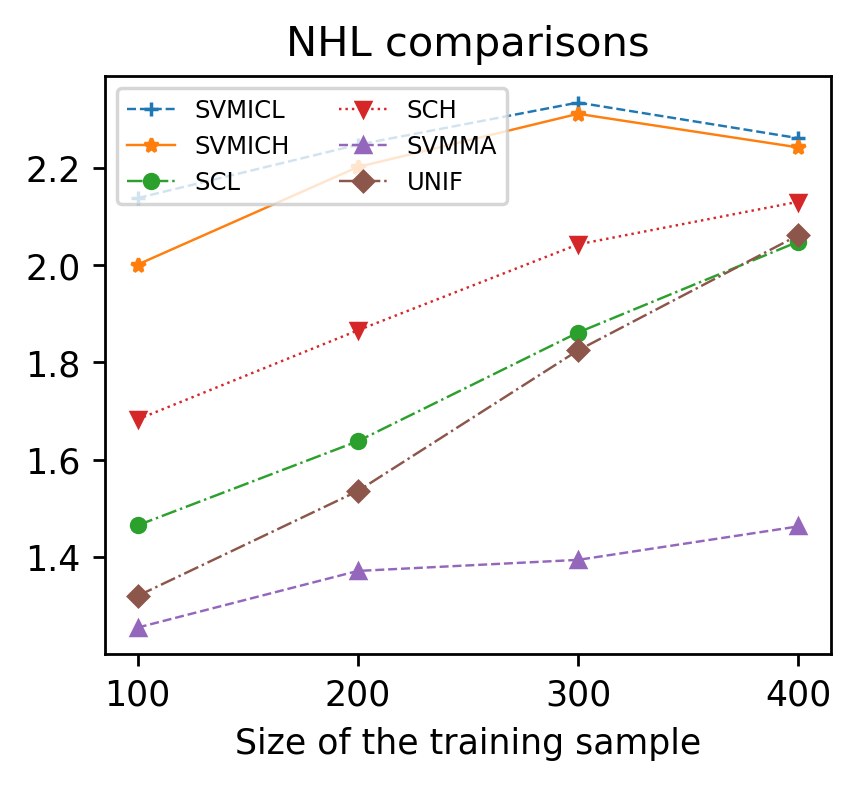}
\caption{DGP2}
\end{subfigure}
\caption{NHL under Scenario S1.}
\label{fig:miss-NHL}
\end{figure}

\begin{figure}[h]
\centering
\begin{subfigure}{0.4\textwidth}
\includegraphics[width=\textwidth]{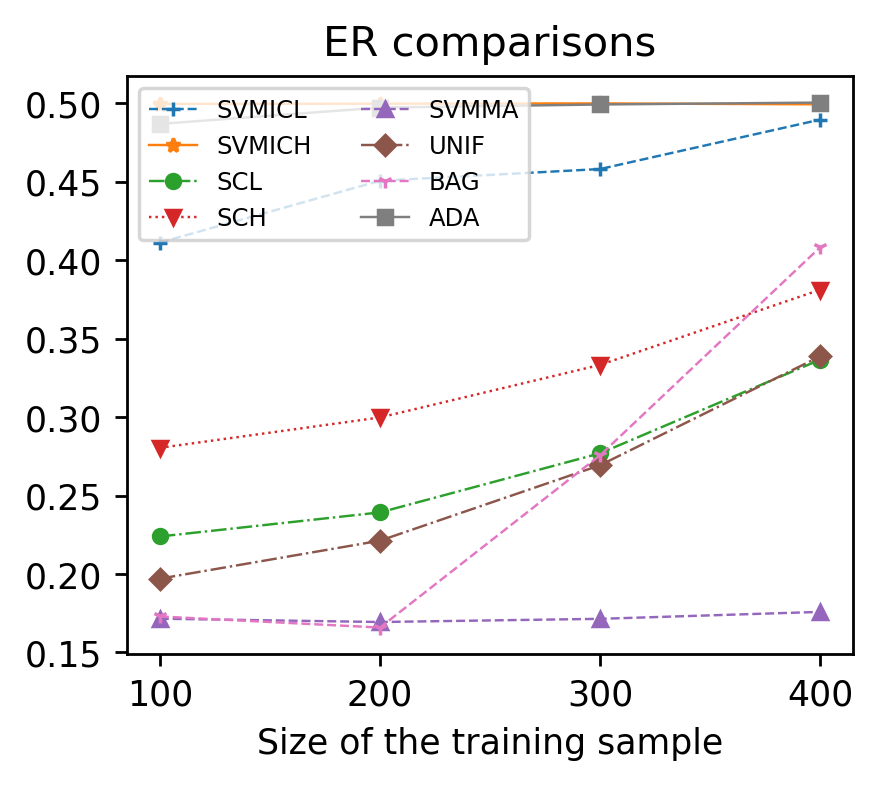}
\caption{DGP1}
\end{subfigure}
\qquad
\begin{subfigure}{0.4\textwidth}
\includegraphics[width=\textwidth]{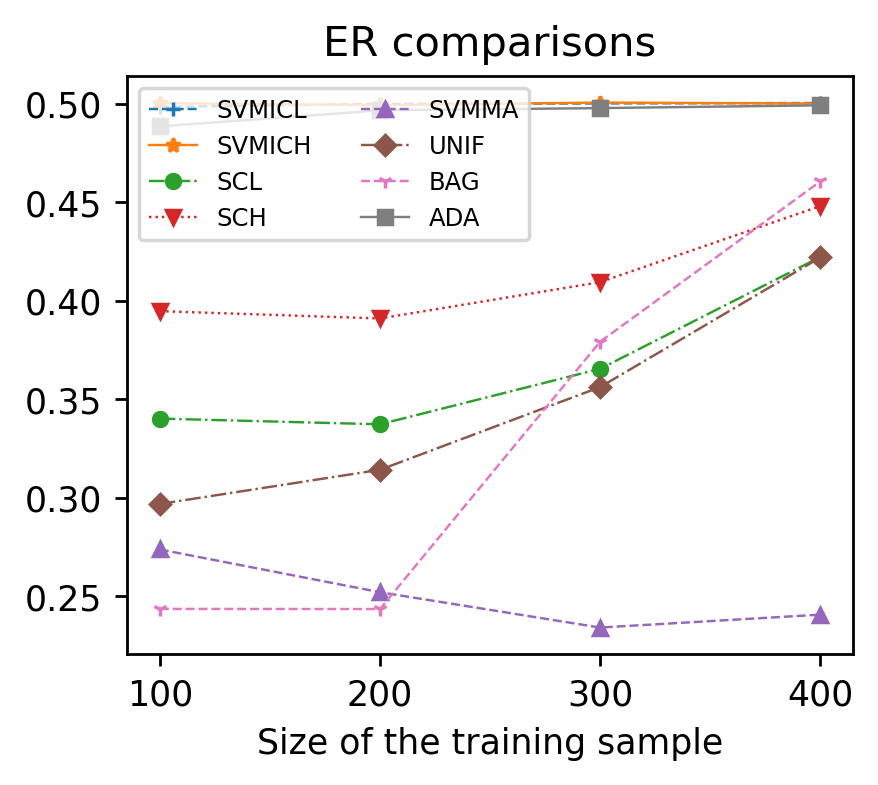}
\caption{DGP2}
\end{subfigure}
\caption{ER under Scenario S1.}
\label{fig:miss-ER}
\end{figure}

\begin{figure}[h]
\centering
\begin{subfigure}{0.4\textwidth}
\includegraphics[width=\textwidth]{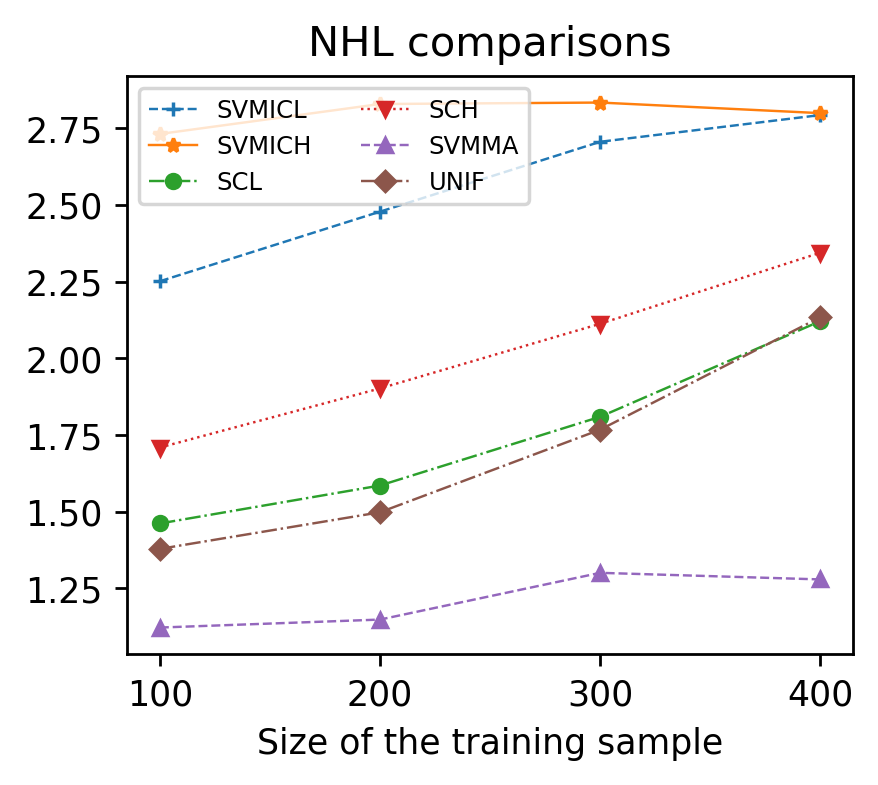}
\caption{DGP1}
\end{subfigure}
\qquad
\begin{subfigure}{0.4\textwidth}
\includegraphics[width=\textwidth]{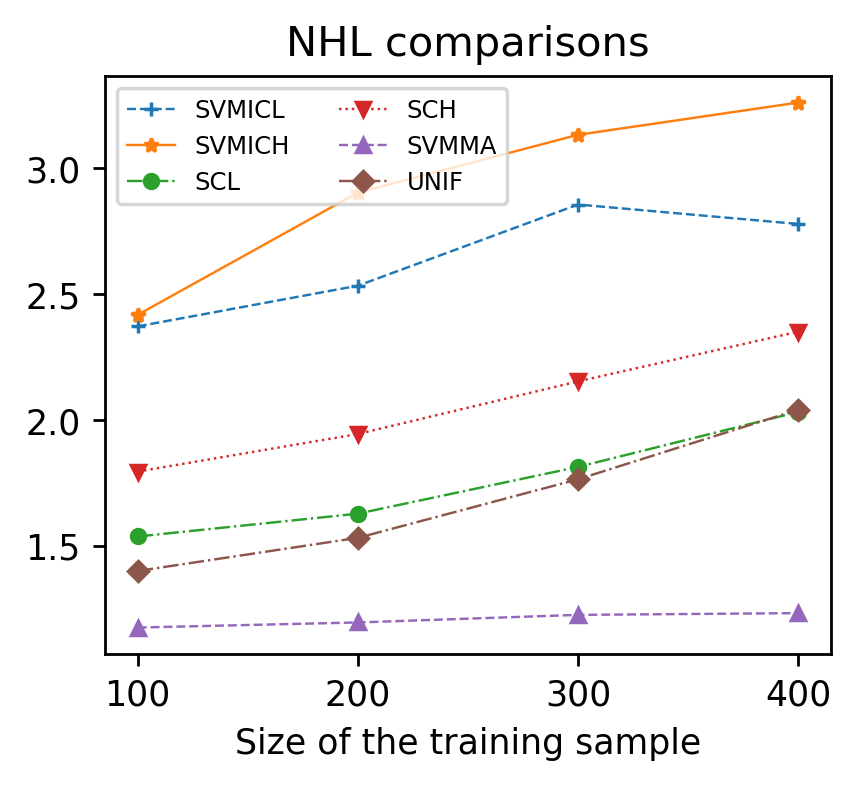}
\caption{DGP2}
\end{subfigure}
\caption{NHL under Scenario S2.}
\label{fig:correct-NHL}
\end{figure}

\begin{figure}[h]
\centering
\begin{subfigure}{0.4\textwidth}
\includegraphics[width=\textwidth]{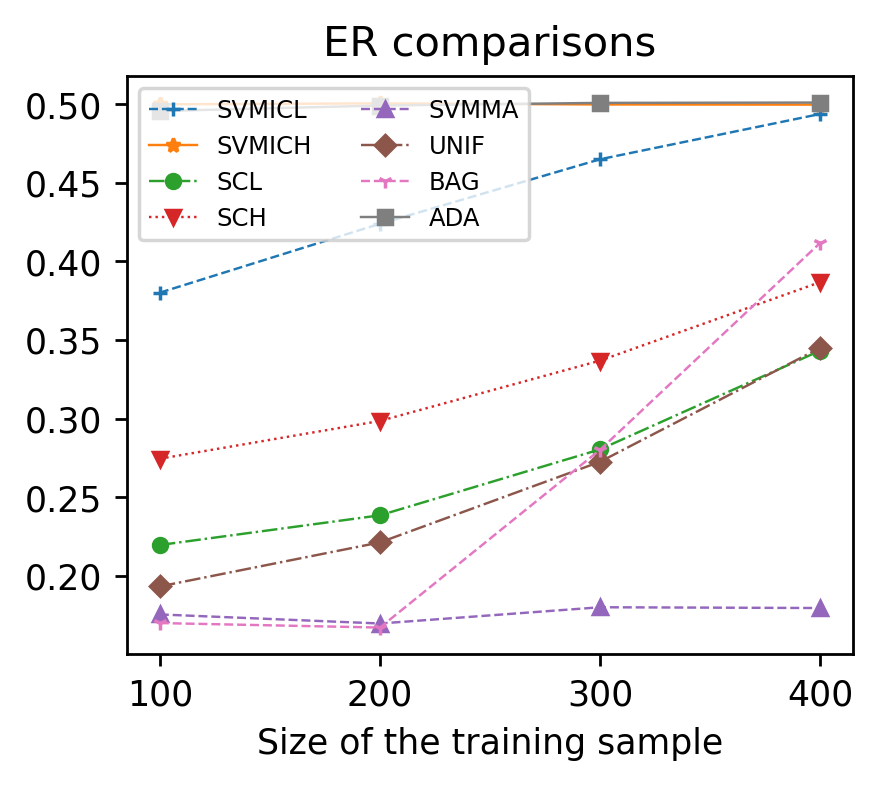}
\caption{DGP1}
\end{subfigure}
\qquad
\begin{subfigure}{0.4\textwidth}
\includegraphics[width=\textwidth]{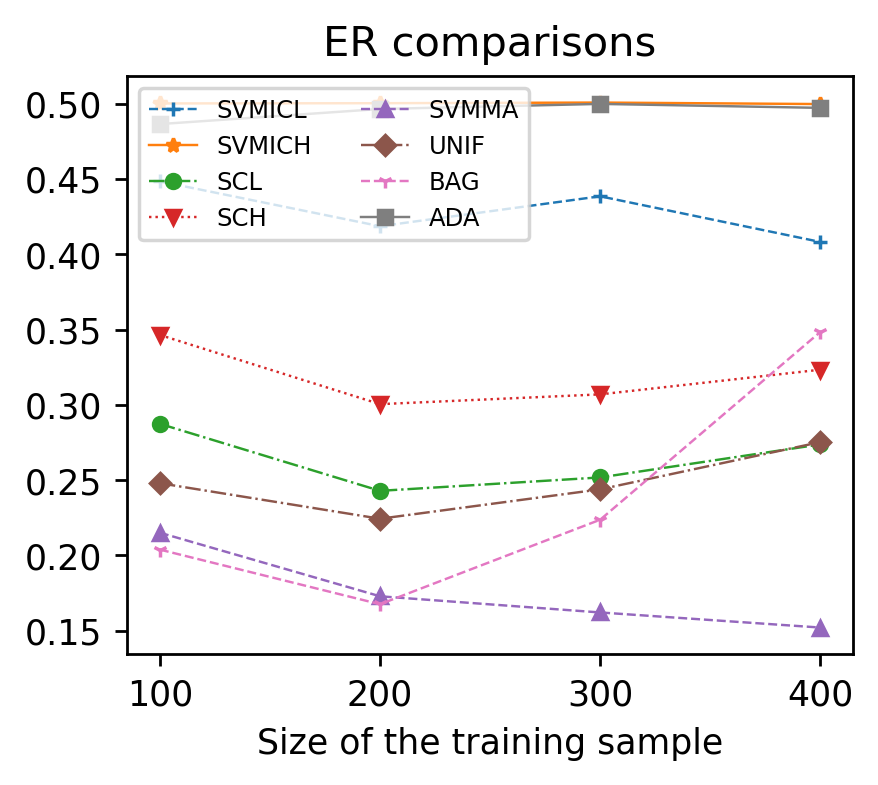}
\caption{DGP2}
\end{subfigure}
\caption{ER under Scenario S2.}
\label{fig:correct-ER}
\end{figure}

\begin{figure}[h]
  \centering
  \includegraphics[width=0.8\textwidth]{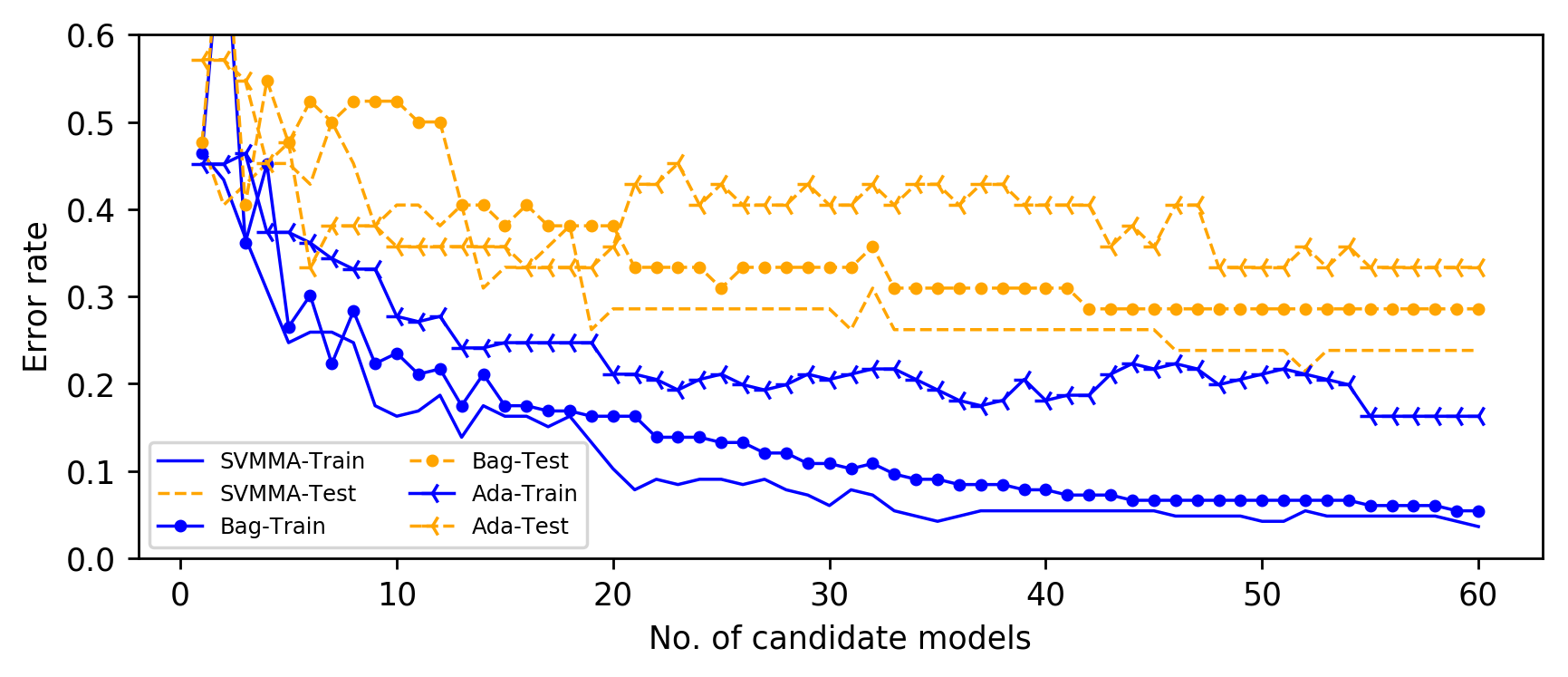}
  \caption{Learning curves for the Sonar data example; Sizes of the training and test samples are $\lfloor 80\% N\rfloor$ and $\lfloor 20\% N\rfloor$  respectively. }\label{fig:Sonar_learn_curves}
\end{figure}

\begin{figure}[h]
\centering
\begin{subfigure}{0.4\textwidth}
\includegraphics[width=\textwidth]{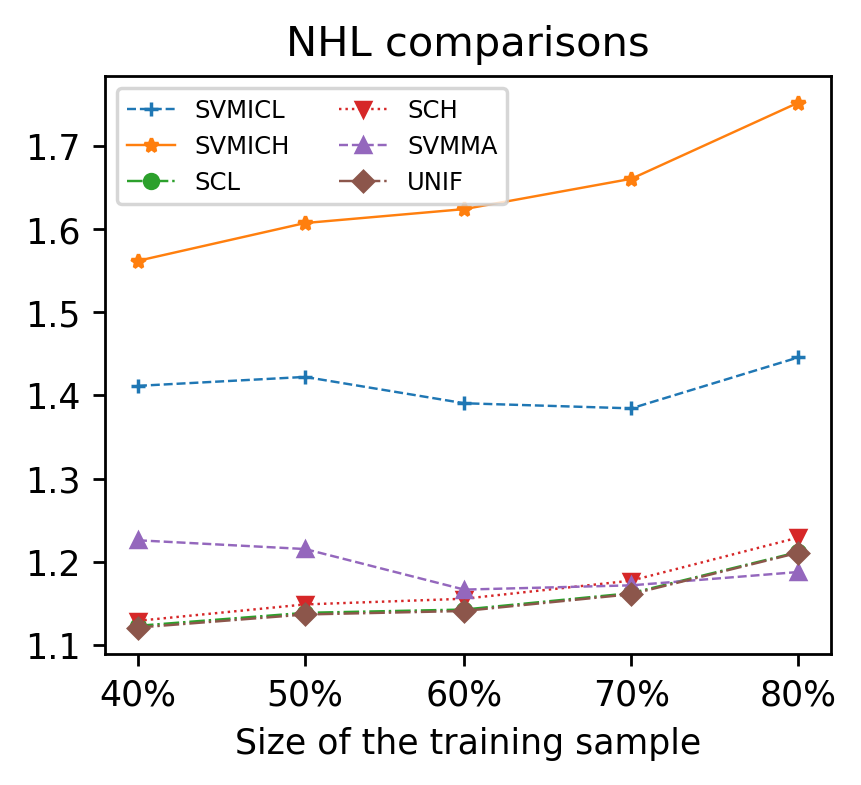}
\end{subfigure}
\qquad
\begin{subfigure}{0.4\textwidth}
\includegraphics[width=\textwidth]{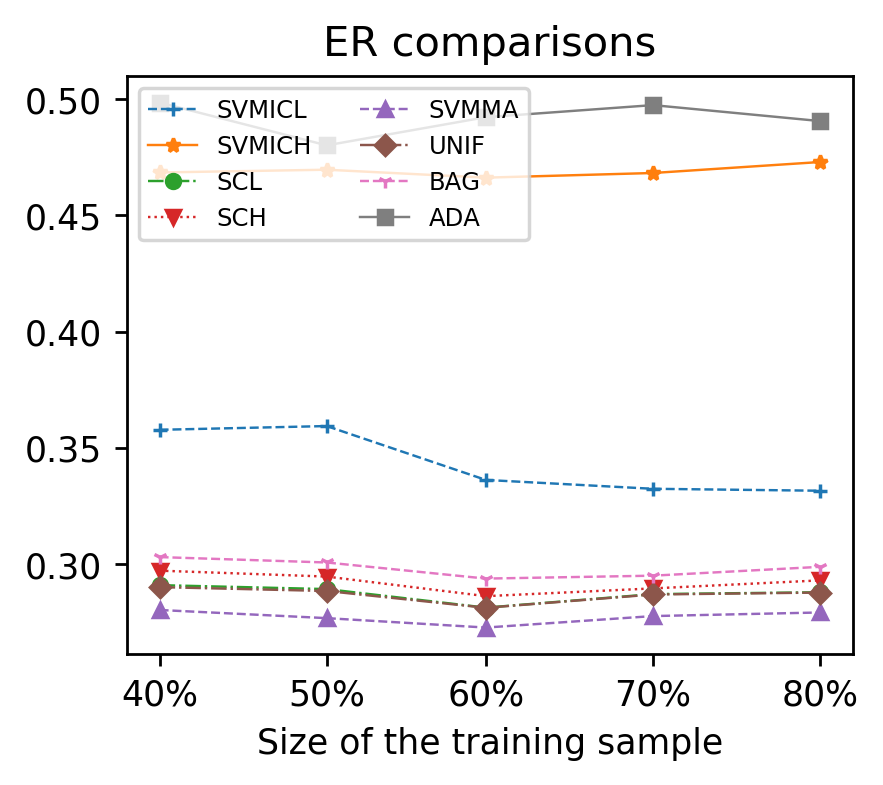}
\end{subfigure}
\caption{Comparisons for the Sonar data example.}%; the number of candidate models is 20.}
\label{fig:Sonar}
\end{figure}

\begin{figure}[h]
  \centering
  \includegraphics[width=0.8\textwidth]{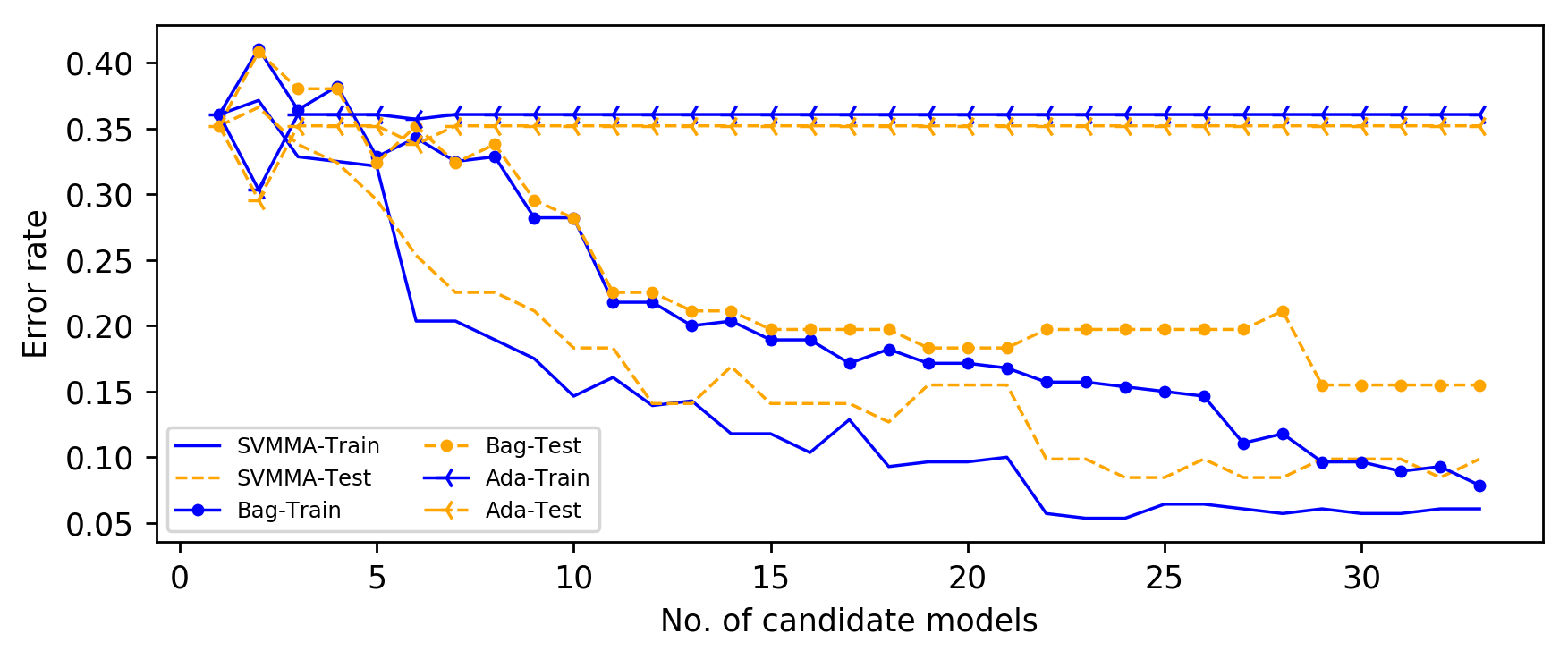}
  \caption{Learning curves for the Ionosphere data example; Sizes of the training and test samples are  $\lfloor 80\%N\rfloor$ and  $\lfloor 20\%N\rfloor$ respectively.}
  \label{fig:Ionosphere_learn_curves}
\end{figure}

\begin{figure}[h]
\centering
\begin{subfigure}{0.4\textwidth}
\includegraphics[width=\textwidth]{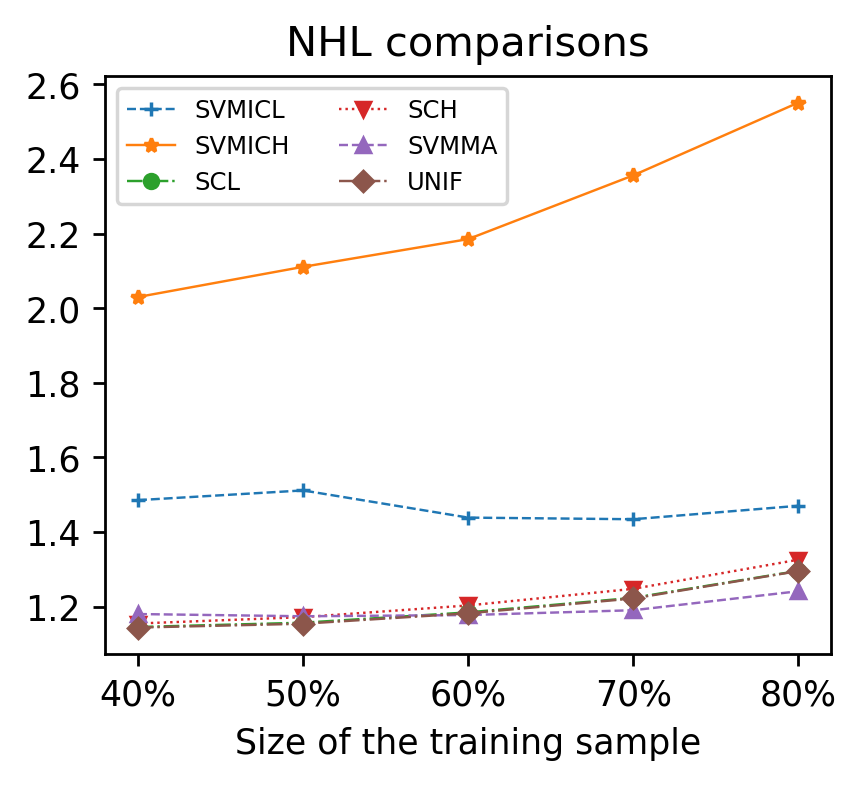}
\end{subfigure}
\qquad
\begin{subfigure}{0.4\textwidth}
\includegraphics[width=\textwidth]{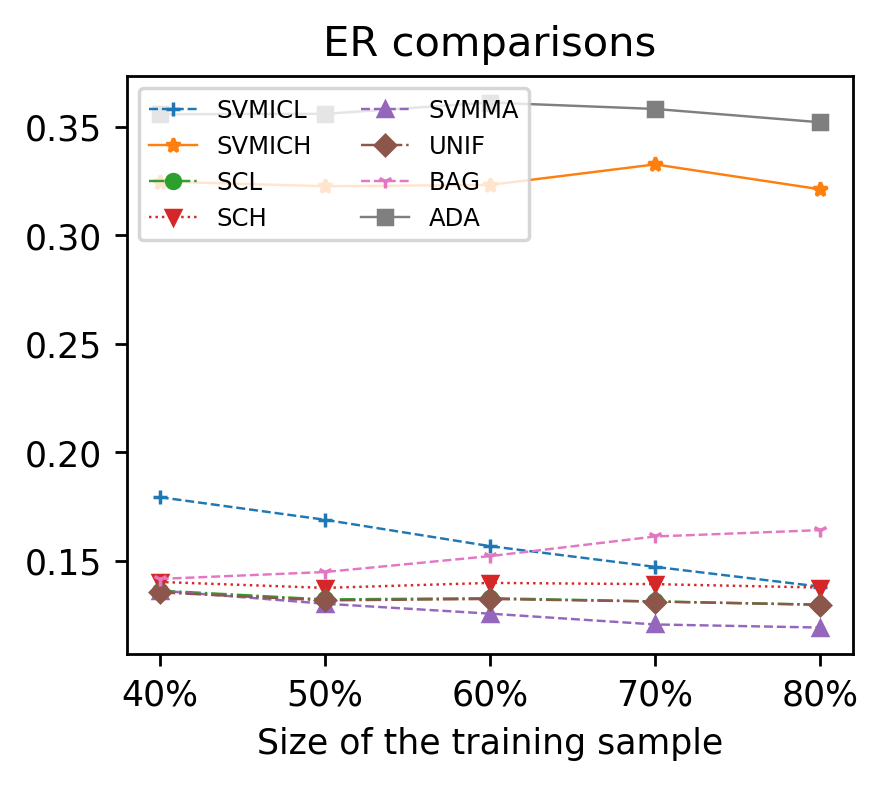}
\end{subfigure}
\caption{Comparisons for the Ionosphere data example.}%, the number of candidate models is 30.}
\label{fig:Ionosphere}
\end{figure}

\begin{figure}[h]
  \centering
  \includegraphics[width=0.8\textwidth]{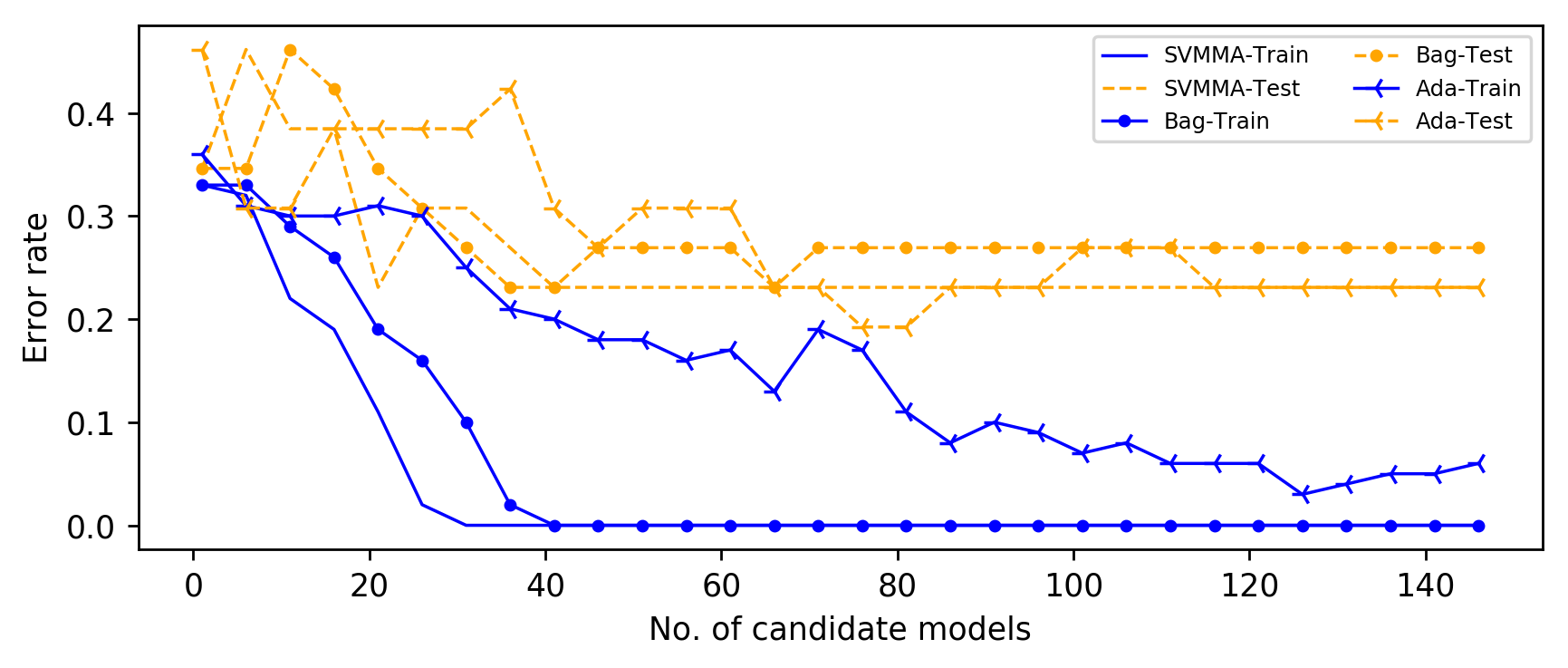}
  \caption{Learning curves for the LSVT data example; Sizes of the training and test samples are  $\lfloor 80\%N\rfloor$ and $\lfloor20\%N\rfloor$  respectively.}\label{fig:LSVT_learn_curves}
\end{figure}

\begin{figure}[h]
\centering
\begin{subfigure}{0.4\textwidth}
\includegraphics[width=\textwidth]{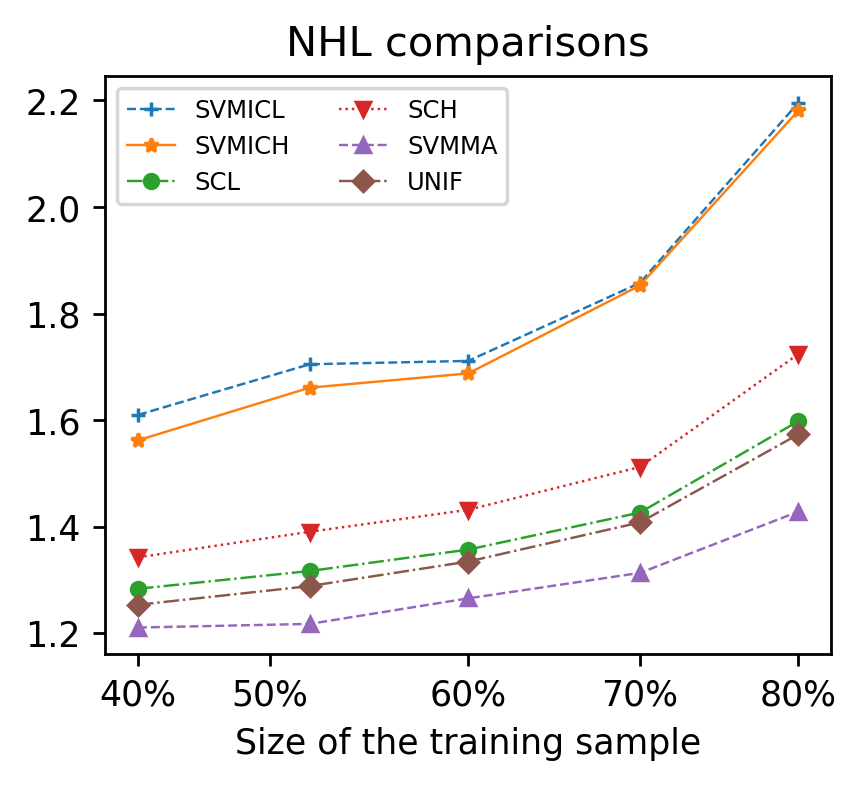}
%\caption{NHL, $p=60$}
\end{subfigure}
\qquad
\begin{subfigure}{0.4\textwidth}
\includegraphics[width=\textwidth]{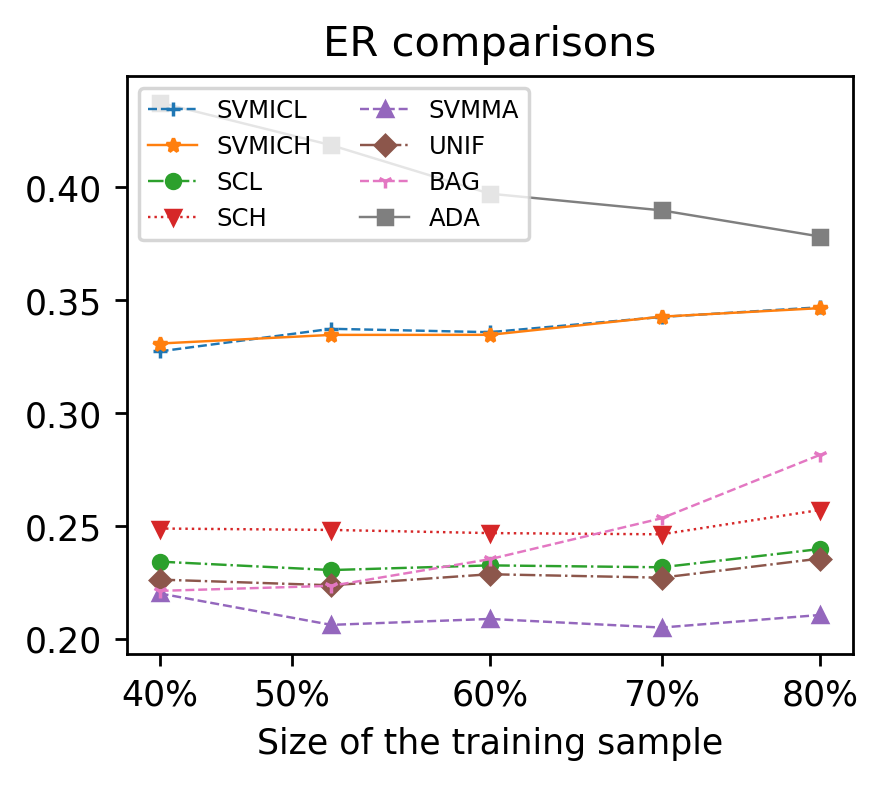}
\end{subfigure}
\caption{Comparisons for the LSVT data example.}%; the number of candidate model is 30.}
\label{fig:LSVT}
\end{figure}

% Acknowledgements should go at the end, before appendices and references

%\acks{XXXXX }

% Manual newpage inserted to improve layout of sample file - not
% needed in general before appendices/bibliography.

\clearpage

\appendix
\section{Appendix}
This section provides the proof of Theorem \ref{thm:optimality}. All limiting processes below correspond to $n\to\infty$ unless stated otherwise.

\subsection{Proof of Lemma \ref{lem:consist}}
\begin{proof}%{Lemma \ref{lem:consist}}
Part of this proof follows from \cite{Zhang2016aconsistent}, but there are some differences and the conclusion is also different from that of \cite{Zhang2016aconsistent}.

We will prove \eqref{eq:8} first.
Recall that $\hbeta_\s=\arg\min_{\bbeta_\s}\{n^{-1}\sumin (1-y_i\x_{\s,i}\tt\bbeta_\s)+2^{-1}\lambda_n\|\bbeta_\s^+\|^2\}$. We will show that, for any $0<\eta<1$, there exist a large constant $\triangle>0$ and an integer $N$ such that when $n>N$, we have

\begin{align}
\Pr\left\{\infs \inf_{\|\bu_\s\|=\triangle} \left\{ l_s\left(\sbeta_\s+\sqrt{n^{-1}\pmax\log(\pmax)}\bu_\s\right)-l_s(\sbeta_\s)\right\}>0 \right\}>1-\eta,\label{eq:191}
\end{align}
where $\bu_\s\in\mathcal{R}^{p_s}$ and $l_s(\bbeta_\s)=n^{-1}\sumin (1-y_i\x_{\s,i}\tt\bbeta_\s)_++2^{-1}\lambda_n\|\bbeta_\s^+\|^2$. As the hinge loss is convex, this implies that with probability $1-\eta$, $\sups \|\hbeta_\s-\sbeta_\s\|\leq \triangle \sqrt{n^{-1}\pmax\log(\pmax)} $.  Hence equation \eqref{eq:8} in Lemma \ref{lem:consist} holds.

Note that $l_s\left(\sbeta_\s+\sqrt{n^{-1}\pmax\log(\pmax)}\bu_\s \right)-l_s\left(\sbeta_\s\right)$ can be expressed as
\begin{align}
&\quad l_s\left(\sbeta_\s+\sqrt{n^{-1}\pmax\log(\pmax)}\bu_\s \right)-l_s\left(\sbeta_\s\right)\notag\\
&=n^{-1} \sumin \left\{\left(1-y_i\x_{\s,i}\tt (\sbeta_\s+\sqrt{n^{-1}\pmax\log(\pmax)}\bu_\s) \right)_+-\left(1-y_i\x_{\s,i}\tt\sbeta_\s\right)_+ \right\}\notag\\
&\quad + 2^{-1}\lambda_n \left\|\spbeta_\s +\sqrt{n^{-1}\pmax\log(\pmax)}\bu_\s^+\right\|^2-2^{-1}\lambda_n \|\spbeta_\s\|^2.\label{eq:202}
\end{align}
It is readily shown that
\begin{align}
  &\quad\sups\sup_{\|\bu_\s\|=\triangle}\left|\left\|\spbeta_\s+\sqrt{n^{-1}\pmax\log(\pmax)}\bu_\s^+ \right\|^2-\left\|\spbeta_\s\right\|^2 \right|\notag\\
  &\leq \sups\sup_{\|\bu_\s\|=\triangle}\left(\left\|\spbeta_\s+\sqrt{n^{-1}\pmax\log(\pmax)}\bu_\s^+ \right\|+\left\|\spbeta_\s\right\| \right)\left\|\sqrt{n^{-1}\pmax\log(\pmax)}\bu_\s^+ \right\|\notag\\
  &\leq2\triangle C_2\pmax \sqrt{n^{-1}\log(\pmax)}+\triangle n^{-1}\pmax\log(\pmax)\notag\\
  & =O(\triangle \pmax \sqrt{n^{-1}\log(\pmax)}),\label{eq:penalty}
 \end{align}
   %and $\lambda_n=n^{-1}$,
   where the last inequality is obtained from Condition~\ref{con:1ii2}. Hence the order  of difference of penalty terms in \eqref{eq:202} is $O(\triangle \lambda_n\pmax \sqrt{n^{-1}\log(\pmax)})$.

Denote
\begin{align*}
g_{s,i}(\bu_\s)&=\left(1-y_i\x_{\s,i}\tt(\sbeta_\s+\sqrt{n^{-1} \pmax\log(\pmax)}\bu_\s) \right)_+-\left(1-y_i\x_{\s,i}\tt \sbeta_\s\right)_+\\
&\quad +\sqrt{n^{-1}\pmax\log(\pmax)}y_i\x_{\s,i}\tt\bu_\s \mathbf{1}\left(1-y_i\x_{\s,i}\tt\sbeta_\s\geq 0 \right)\notag\\
&\quad -\Exp\left[\left(1-y_i\x_{\s,i}\tt(\sbeta_\s+\sqrt{n^{-1}\pmax\log(\pmax)}\bu_\s)\right)_+ \right]+\Exp\left[\left(1-y_i\x_{\s,i}\tt\sbeta_\s\right)_+ \right].
\end{align*}
It can be verified that $\Exp[g_{\s,i}(\bu)]=0$, $s=1,2,...,S_n$ by the definition of $\sbeta_\s$ and $\J_\s(\sbeta_\s)=0$. Note that \eqref{eq:202} can be further decomposed as
\begin{align*}
n^{-1}\sumin \left\{\left(1-y_i\x_{\s,i}\tt (\sbeta_\s+\sqrt{n^{-1}\pmax\log(\pmax)}\bu_\s) \right)_+-\left(1-y_i\x_{\s,i}\tt\sbeta_\s\right)_+ \right\}=n^{-1} (A_{s,n}+B_{s,n} ),
\end{align*}
where
\begin{align*}
  A_{s,n}=\sumin g_{s,i}(\bu_\s)
\end{align*}
and
\begin{align}
  B_{s,n}&=\sumin \Big[ -\sqrt{n^{-1}\pmax\log(\pmax)}y_i \x_{\s,i}\tt\bu_\s \mathbf{1}\left(1-y_i\x_{\s,i}\tt\sbeta_\s\geq 0 \right)\notag\\
  &\quad+ \Exp\left\{\left(1-y_i\x_{\s,i}\tt(\sbeta_\s+\sqrt{n^{-1}\pmax\log(\pmax)}\bu_\s) \right)_+ \right\}-\Exp\left\{\left(1-y_i\x_{\s,i}\tt\sbeta_\s \right)_+\right\}\Big].\label{eq:211}
\end{align}

The remainder of the proof consists of three steps. In Step 1, we demonstrate that
\begin{align} \label{step1}
\sups \sup_{\|\bu_\s\|=\triangle}|A_{s,n}|=\triangle^{3/2}\pmax o_{p}(1).
\end{align}
In Step 2, it is shown that $\infs \inf_{\|\bu_\s\|=\triangle}B_{s,n}$ dominates the terms of order $\triangle^{3/2}\pmax o_{p}(1)$ and is larger than zero. In Step 3, we use the results from the previous steps to prove \eqref{eq:191}.

Step 1: We use the covering number introduced by \cite{vaart1996weakconvergence} to prove the uniform rate in (\ref{step1}). It suffices to show, for any $\epsilon>0$, that
\begin{align}
\Pr\left(\sups \sup_{\|\bu_\s\|=\triangle} p_s^{-1}\left|\sumin g_{s,i}(\bu_\s) \right|>\triangle^{3/2}\epsilon \right)\to 0.\label{eq:222}
\end{align}
 Note that the hinge loss satisfies the Lipschitz condition and  $\supi \|\x_{\s,i}\|\leq C_1\sqrt{p_s}$, $\supi\Exp\|\x_{\s,i}\|\leq C_1\sqrt{p_s}$ from Condition~\ref{con:1ii}.
 It is readily shown that
\begin{align}
 |g_{s,i}(\bu_\s)|
  &\leq 3\triangle \sqrt{n^{-1}\pmax\log(\pmax)}  \max\left\{\supi\|\x_{\s,i}\|,\supi\Exp\|\x_{\s,i}\|\right\}\notag\\
  &\leq 3C_1\triangle\pmax \sqrt{n^{-1}\log(\pmax)}\label{eq:psg}
\end{align}
and thus $\sups\sup_{\|\bu_\s\|=\triangle}p_s^{-1}|g_{s,i}(\bu_\s)|=o(1)$ by Condition~\ref{con:1v}.
By Lemma 2.5 of \cite{vande2000Empirical}, the ball $\{\bu_\s:\|\bu_\s\|\leq \triangle\}$ in $\mathcal{R}^{p_s+1}$ can be covered by $N_s$ balls with radius $\zeta_s$, where $N_s\leq \{(4\triangle +\zeta_s)/\zeta_s\}^{p_s+1}$. Denote $\bu_\s^{1},..,\bu_\s^{N}$ as the centers of the $N_s$ balls, let $\zeta_s=(nM_1)^{-1} p_s$ (for some large constant $M_1>0$ ) and denote $\mathcal{U}_s^{k}=\{\bu_\s: \|\bu_\s-\bu_\s^{k}\|\leq \zeta_s \& \|\bu_\s\|=\triangle\}$. For any $\epsilon>0$, we have
\begin{align}
&\quad\sups\max_{1\leq k\leq N_s}  \sup_{\bu_\s\in\mathcal{U}_s^{(k)}} p_s^{-1}\left| \sumin g_{s,i}(\bu_\s)-\sumin g_{s,i}(\bu_\s^k)\right|\notag\\
&\leq \sups\max_{1\leq k\leq N_s}  \sup_{\bu_\s\in\mathcal{U}_s^{k}} p_s^{-1}\sumin\left|  g_{s,i}(\bu_\s)- g_{s,i}(\bu_\s^k)\right|\notag\\
&\leq \sups\max_{1\leq k\leq N_s}  \sup_{\bu_\s\in\mathcal{U}_s^{k}} np_s^{-1}\Big\{2\sqrt{n^{-1}\pmax\log(\pmax)}\|\x_{\s,i}\| \|\bu_\s-\bu_\s^{k}\|\notag\\
&\quad +\sqrt{n^{-1}\pmax\log(\pmax)}\|\bu_\s-\bu_\s^{k}\|\Exp\|\x_{\s,i}\| \Big\} \notag\\
&\leq\sups 3\triangle n p_s^{-1} \sqrt{n^{-1}\pmax\log(\pmax)}\max\left\{\supi \|\x_{\s,i}\|,\supi\Exp\|\x_{\s,i}\|\right\}\zeta_s\notag\\
&\leq 3C_1 M_1^{-1}\triangle \pmax \sqrt{n^{-1}\log(\pmax)}\notag\\
&=o( \triangle^{3/2}p_{\min}\epsilon/2),\label{eq:11}
\end{align}
where the last inequality arises from Condition \ref{con:1v}. From \eqref{eq:11}, it can be shown that
\begin{align}
&\Pr\left(\sups \sup_{\|\bu_\s\|=\triangle} p_s^{-1}\left |\sumin g_{s,i}(\bu_\s) \right|>\triangle^{3/2} \epsilon \right)\notag\\
&\leq \Pr\Bigg(\sups \max_{1\leq k\leq N_s} \sup_{\bu_\s\in\mathcal{U}_s^{(k)}}p_s^{-1}\left|\sumin g_{s,i}(\bu_\s)-\sumin g_{s,i}(\bu_\s^{k})\right|\notag\\
&\quad+\sups \max_{1\leq k \leq N_s}p_s^{-1}\left|\sumin g_{s,i}(\bu_\s^{k}) \right|>\triangle^{3/2}\epsilon \Bigg)\notag\\
&\leq \Pr\Bigg(\sups \max_{1\leq k\leq N_s} \sup_{\bu_\s\in\mathcal{U}_s^{(k)}}\left|\sumin g_{s,i}(\bu_\s)-\sumin g_{s,i}(\bu_\s^{k})\right|>\triangle^{3/2}p_{\min}\epsilon/2\Bigg)\notag\\
&\quad+\sums \sum_{k=1}^{N_s}\Pr\left(  \left|\sumin g_{s,i}(\bu_\s^{k}) \right|>\triangle^{3/2}p_s\epsilon/2 \right)\notag\\
&= \sums \sum_{k=1}^{N_s}\Pr\left(  \left|\sumin g_{s,i}(\bu_\s^{k}) \right|>\triangle^{3/2}p_s\epsilon/2 \right)+o(1)\label{eq:242}
\end{align}
and $\sumin g_{s,i}(\bu_\s^{(k)})$ is the sum of independent zero-mean random variables.

By the bounded conditional density, under Conditions \ref{con:1i} and \ref{con:1iv}, recognising that\\ $\supi \|\x_{\s,i}\|\leq C_1\sqrt{p_s}$, we have
\begin{align}
&\quad \Pr\left(|1-y_i\x_{\s,i}\tt\sbeta_\s|\leq \sqrt{n^{-1}\pmax\log(\pmax)} \supi\|\x_{\s,i}\|\triangle \right)\notag\\
&=\Pr\Big(\pm1-\sqrt{n^{-1}\pmax\log(\pmax)} \supi\|\x_{\s,i}\|\triangle\leq\x_{\s,i}\tt\sbeta_\s\leq \notag\\
 &\quad\sqrt{n^{-1}\pmax\log(\pmax)} \supi\|\x_{\s,i}\|\triangle\pm1\Big| y_i=\pm 1\Big)\notag\\
&\leq 2C_3\sqrt{n^{-1}\pmax \log(\pmax)}\supi \|\x_{\s,i}\|\triangle\notag\\
&\leq 2\triangle C_1 C_3\sqrt{n^{-1} p_s\pmax\log(\pmax)}.
%\notag\\
%&= O\left(\triangle \pmax \sqrt{n^{-1}\log(\pmax)}\right)
\label{eq:281}
\end{align}
Note that when $1-y_i\x_{\s,i}\tt\sbeta_\s<\sqrt{n^{-1}\pmax\log(\pmax)}y_i\x_{\s,i}\tt\bu_\s$ and $\sqrt{n^{-1}\pmax\log(\pmax)}\\y_i\x_{\s,i}\tt\bu_\s<0$, or when $1-y_i\x_{\s,i}\tt\sbeta_\s>\sqrt{n^{-1}\pmax\log(\pmax)}y_i\x_{\s,i}\tt\bu_\s$ and $\sqrt{n^{-1}\pmax\log(\pmax)}\\y_i\x_{\s,i}\tt\bu_\s>0$,
\begin{align}
&\left(1-y_i\x_{\s,i}\tt(\sbeta_\s+\sqrt{n^{-1}\pmax\log(\pmax)}\bu_\s) \right)_+-(1-y_i\x_{\s,i}\tt\sbeta_\s)_+\notag\\
&\quad +\sqrt{n^{-1}\pmax\log(\pmax)} y_i\x_{\s,i}\tt\bu_\s\mathbf{1}(1-y_i\x_{\s,i}\tt\sbeta_\s\geq 0)=0.\label{eq:221}
\end{align}
Furthermore, equation \eqref{eq:221} holds when $|1-y_i\x_{\s,i}\tt\sbeta_\s|>\sqrt{n^{-1}\pmax\log(\pmax)}\supi \|\x_{\s,i}\|\triangle$ as $\sqrt{n^{-1}\pmax\log(\pmax)}\supi \|\x_{\s,i}\|\triangle>\left|\sqrt{n^{-1}\pmax\log(\pmax)}y_i\x_{\s,i}\tt\bu_\s\right|$. Hence we can write
{
\begin{align}
&\quad\sumin \Exp\{g^2_{s,i}(\bu_\s^k)\}\\
&\leq \sumin \Exp\Bigg[\Bigg\{\Big|\left(1-y_i\x_{\s,i}\tt(\sbeta_\s+\sqrt{n^{-1}\pmax\log(\pmax)}\bu_\s^{k}) \right)_+-\left(1-y_i\x_{\s,i}\tt \sbeta_\s\right)_+\Big|\notag\\
&\quad+\Big|\sqrt{n^{-1}\pmax\log(\pmax)}y_i\x_{\s,i}\tt\bu_\s^{k} \Big| \Bigg\}^2\mathbf{1}\Big(|1-y_i\x_{\s,i}\tt\sbeta_\s|\leq \sqrt{n^{-1}\pmax\log(\pmax)}\notag\\
&\quad\times\supi\|\x_{\s,i}\|\triangle\Big)\Bigg]\notag\\
&\leq \sumin \Exp \Bigg\{ \left(2\sqrt{n^{-1}\pmax\log(\pmax)}\x_{\s,i}\tt\bu_\s^k\right)^2\mathbf{1}\Big(|1-y_i\x_{\s,i}\tt\sbeta_\s|\leq \sqrt{n^{-1}\pmax\log(\pmax)}\notag\\
&\quad\times\supi\|\x_{\s,i}\|\triangle\Big)\Bigg\}\notag\\
&\leq
\left(2\sqrt{n^{-1}\pmax\log(\pmax)}\supi\|\x_{\s,i}\|\triangle\right)^2\sumin\Exp\Big(|1-y_i\x_{\s,i}\tt\sbeta_\s|\leq \sqrt{n^{-1}\pmax\log(\pmax)}\notag\\
&\quad\times\supi\|\x_{\s,i}\|\triangle\Big)\notag\\
&\leq  4 C_1^2\triangle^2 n^{-1} p_s \pmax \log(\pmax)\sumin\Exp\Big(|1-y_i\x_{\s,i}\tt\sbeta_\s|\leq \sqrt{n^{-1}\pmax\log(\pmax)}\supi\|\x_{\s,i}\|\triangle\Big)\notag\\
&\leq 4 C_1^2\triangle^2 n^{-1} p_s \pmax \log(\pmax)\times 2n\triangle C_1 C_3\sqrt{n^{-1} p_s\pmax\log(\pmax)}\notag\\
&=8\triangle^3 C_1^3C_3n^{-1/2}p_s^{3/2}p^{3/2}_{\max}\log^{3/2}(\pmax)
,\label{eq:22}
\end{align}
where the second-to-last inequality arises from $\supi \|\x_{\s,i}\|\leq C_1\sqrt{p_s}$ and the last inequality is from \eqref{eq:281}.
}
%{\proofred
%%Now, for any $1< m<\infty$, under Condition \ref{con:1ii}, we have
%%\begin{align}
%%&\quad\left\{\Exp\left(2\sqrt{n^{-1}\pmax\log(\pmax)}\x_{\s,i}\tt\bu^{(k)} \right)^{2m}\right\}^{1/m}\notag\\
%%&\leq  4 \triangle^2n^{-1}  \pmax \log(\pmax) \left(\Exp\|\x_{\s,i}\|^{2m}\right)^{1/m}\notag\\
%%&\leq 4 C_1^2\triangle^2 n^{-1} p_s \pmax \log(\pmax).
%%%\notag\\
%%%&=O\left(\triangle^2 p^2_m\log(\pmax)\right).
%%\label{eq:272}
%%\end{align}
%
%By \eqref{eq:22}--\eqref{eq:281}, using the Holder's inequality, for $1< m,l<\infty$ and $1/m+1/l=1$, we can write
%\begin{align}
%&\quad\sumin\Exp\{g^2_{s,i}(\bu^k)\}\notag\\
%%&\leq \sumin \Exp \Bigg\{ \left(2\sqrt{n^{-1}\pmax\log(\pmax)}y_i\x_{\s,i}\tt\bu^{(k)}\right)^2\notag\\
%%&\quad\times\mathbf{1}\left(|1-y_i\x_{\s,i}\tt\sbeta_\s|\leq \sqrt{n^{-1}\pmax\log(\pmax)}\supi\|\x_{\s,i}\|\triangle\right)\Bigg\}\notag\\
%&\leq \sumin \left\{\Exp\left(2\sqrt{n^{-1}\pmax\log(\pmax)}y_i\x_{\s,i}\tt\bu^{(k)} \right)^{2m}\right\}^{1/m}\notag\\
%&\quad\times \left\{\Pr\left(|1-y_i\x_{\s,i}\tt\sbeta_\s|\leq \sqrt{n^{-1}\pmax\log(\pmax)}\supi\|\x_{\s,i}\|\triangle\right)\right\}^{1/l}\notag\\
%&\leq 4 C_1^2\triangle^2  p_s \pmax \log(\pmax)
%\left(2\triangle C_1 C_3\sqrt{n^{-1} p_s\pmax\log(\pmax)}\right)^{1/l}.\notag
%%&=4\sqrt{2}C_1^{5/2}C_3^{1/2}\triangle^{5/2}n^{-1/4}\pmax^{5/2}\log^{5/4}(\pmax).
%\end{align}
%Let $l\to1$, we obtain
%\begin{align}
%\sumin\Exp\{g^2_{s,i}(\bu^k)\}\leq
%8\triangle^3 C_1^3 C_3 n^{-1/2}p_s^{3/2}p^{3/2}_{\max}\log^{3/2}(\pmax).\label{eq:sumg}
%\end{align}
%}
Finally, by Bernstein's inequality and recognising \eqref{eq:psg} and \eqref{eq:22}, we can write
\begin{align}
&\quad\sums\sum_{k=1}^{N_s}\Pr\left( \left|\sumin g_{s,i}(\bu^k) \right|>\triangle^{3/2}p_s\epsilon/2 \right)\notag\\
&\leq \sums\sum_{k=1}^{N_s}2\exp\left(-\frac{\triangle^{3}p_s^2\epsilon^2/4}{\sumin \Exp\{g_{s,i}^2(\bu^k)\}+3\triangle^{5/2} C_1p_s\pmax \sqrt{n^{-1}\log(\pmax)} \epsilon/2 }\right)\notag\\
%&\leq \sums\sum_{k=1}^{N}2\exp\left(-\frac{\triangle^{5/2}p_s^2\epsilon^2/4}{2\sumin \Exp\{g_{s,i}^2(\bu^k)\} }\right)\notag\\
&\leq \sums \left(\frac{4\triangle+(nM_1)^{-1}p_s}{(nM_1)^{-1}p_s} \right)^{p_s+1} \notag\\
&\quad\times\exp\left(-\frac{\triangle^{3}p_s^2\epsilon^2/4}{ 8\triangle^3 C_1^3 C_3 n^{-1/2}p_s^{3/2}p^{3/2}_{\max}\log^{3/2}(\pmax)+
3\triangle^{5/2} C_1p_s\pmax \sqrt{n^{-1}\log(\pmax)} \epsilon/2  }\right)\notag\\
&\leq S_n \left(\frac{4\triangle M_1n}{p_{\min}}+1 \right)^{\pmax+1}
\exp\left(-\frac{\triangle^{3}p_{\min}^{1/2}\epsilon^2/4}{ 16\triangle^3 C_1^3 C_3 n^{-1/2}p^{3/2}_{\max}\log^{3/2}(\pmax) }\right)\notag\\
&=O(1)\exp\left\{\log(S_n)+(\pmax+1)\log(4\triangle nM_1p^{-1}_{\min}+1)-64^{-1}C_1^{-3}C_3^{-1}\epsilon^2n^{1/2}p_{\min}^{1/2}p^{-3/2}_{\max}\log^{-3/2}(\pmax)\right\}\notag\\
%&= \sums  \exp\left\{(p_s+1)\log(4nC_1/p_s+1)
%-\epsilon^2n^{1/4} p_m^{-5/2} p_{s}^2\log^{-5/4}(p_m)/16\sqrt{2}C_1^{5/2}
%\right\}\notag\\
%&\leq \exp\left\{\log(S_n)+(\pmax+1)\log(4nC_1/p_{\min}+1)
%-2\epsilon^2n^{1/4} p_m^{-5/2} p_{\min}^2\log^{-5/4}(p_m)/16\sqrt{2}C_1^{5/2}
%\right\}\notag\\
&=o(1),\label{eq:301}
\end{align}
where the last equality is due to Condition~\ref{con:1v} { and
 $S_n=O\{\exp(n^{\tau})\}$ for $\tau\in (0, 1/2-3\kappa/2)$.}
 The proof of \eqref{eq:222} is complete by combining \eqref{eq:242} and \eqref{eq:301}.

Step 2: Let us rewrite $B_{s,n}$ as $B_{s,n}\equiv B_{s,n1}+B_{s,n2}$, where
\begin{align}
 B_{s,n1}&= -\sumin \sqrt{n^{-1}\pmax\log(\pmax)}y_i \x_{\s,i}\tt\bu_s \mathbf{1}\left(1-y_i\x_{\s,i}\tt\sbeta_\s\geq 0 \right),\notag\\
\textrm{and} \notag    \\
B_{s,n2}&= \Exp\left\{\left(1-y_i\x_{\s,i}\tt(\sbeta_\s+\sqrt{n^{-1}\pmax\log(\pmax)}\bu_\s) \right)_+ \right\}-\Exp\left\{\left(1-y_i\x_{\s,i}\tt\sbeta_\s \right)_+\right\}.\notag
\end{align}
To analyse $B_{s,n1}$, we observe that
\begin{align}
&\quad\left|\sumin  y_i\x_{\s,i}\tt\bu_\s\mathbf{1}\left(1-y_i\x_{\s,i}\tt\sbeta_\s\geq 0 \right) \right|\notag\\
&=\left|\sum_{j=0}^{p_s}\sumin y_i\x_{\s,ij}u_{\s,j}\mathbf{1}\left(1-y_i\x_{\s,i}\tt\sbeta_\s\geq 0 \right) \right|\notag\\
&\leq\sum_{j=0}^{p_s}|u_{\s,j}|\max_{0\leq j\leq p_s}\left|\sumin y_i x_{\s,ij}\mathbf{1}\left(1-y_i\x_{\s,i}\tt\sbeta_\s\geq 0 \right) \right|\notag\\
&\leq\sqrt{\sum_{j=0}^{p_s}u_{\s,j}^2}\sqrt{\sum_{j=0}^{p_s}1}\max_{0\leq j\leq p_s}\left|\sumin y_ix_{\s,ij}\mathbf{1}\left(1-y_i\x_{\s,i}\tt\sbeta_\s\geq 0 \right) \right|\notag\\
&\leq \sqrt{p_s+1} \triangle\max_{0\leq j\leq p_s}\left|
\sumin y_i x_{\s,ij}\mathbf{1}\left(1-y_i\x_{\s,i}\tt\sbeta_\s\geq 0 \right)\right|.\label{eq:251}
\end{align}
By the definition of $\J_s(\sbeta_\s)$, note that $\Exp\left[y_ix_{\s,ij}\mathbf{1}\left(1-y_i\x_{\s,i}\tt\sbeta_\s\geq 0 \right)\right]=0$ for $0\leq j \leq p_s$. By Lemma 14.24 in \cite{Buhlmann2011Statisticshigh} (the Nemirovski moment inequality),
\begin{align}
&\quad\Exp\left\{\max_{0\leq j\leq p_s}\left|\sumin y_ix_{\s,ij}\mathbf{1}\left(1-y_i\x_{\s,i}\tt\sbeta_\s\geq 0 \right) \right|\right\}\notag\\
&\leq \sqrt{8\log(2p_s+2)}\Exp\left(\max_{1\leq j\leq p_s+1}\sumin y_i^2x^2_{\s,ij}\right)^{1/2}\notag\\
&\leq\sqrt{ 8\log(2p_s+2)}\sqrt{nC_1^2}\notag\\
&=O(\sqrt{n\log(p_s)}),\label{eq:35}
\end{align}
where the last inequality is established by Condition \ref{con:1ii}.
Additionally, using Markov's inequality and by \eqref{eq:35}, we obtain
\begin{align}
\max_{0\leq j\leq p_s}\left|\sumin y_ix_{\s,ij}\mathbf{1}\left(1-y_i\x_{\s,i}\tt\sbeta_\s\geq 0 \right) \right|= O_{p}(\sqrt{n \log(p_s)}).\label{eq:261}
\end{align}
Combining \eqref{eq:251} and \eqref{eq:261}, we have
\begin{align}
&\quad\sups\sup_{\|\bu_\s\|=\triangle}|B_{s,n1}|\notag\\
&=\sups\sup_{\|\bu_\s\|=\triangle}\left|\sumin -\sqrt{n^{-1}\pmax\log(\pmax)}y_i \x_{\s,i}\tt\bu_\s \mathbf{1}\left(1-y_i\x_{\s,i}\tt\sbeta_\s\geq 0 \right)\right|\notag\\
&=\sqrt{n^{-1}\pmax\log(\pmax)}\sups\sup_{\|\bu_\s\|=\triangle}\left|\sumin  y_i\x_{\s,i}\tt\bu_\s\mathbf{1}\left(1-y_i\x_{\s,i}\tt\sbeta_\s\geq 0 \right) \right|\notag\\
&\leq\sqrt{n^{-1}\pmax\log(\pmax)} \triangle O_p\left\{\sqrt{\pmax+1}\sqrt{n\log(\pmax)}\right\}\notag\\
&= O_{p}(\triangle \pmax\log(\pmax)).\label{eq:341}
\end{align}
%{\red $O(\triangle S_n\pmax \log(\pmax))$}
Turning to $B_{s,n2}$, under Conditions \ref{con:1viii} and \ref{con:1v} and according to \cite{Koo2008Abahadur}, $\Hess_\s(\bbeta_\s)$ is element-wise continuous at $\sbeta_s$. By Taylor expansion of the hinge loss at $\sbeta_\s$, we have
\begin{align}\label{eq:taylor}
\Hess_\s\left(\sbeta_\s+t\sqrt{n^{-1}\pmax}\bu_\s \right)=\Hess_\s(\sbeta_\s)+o(1).
\end{align}
Hence, it is shown that
\begin{align}
&\quad\infs\inf_{\|\bu_\s\|=\triangle}B_{s,n2}\notag\\
&=\infs\inf_{\|\bu_\s\|=\triangle}\sumin \left[\Exp\left\{\left(1-y_i\x_{\s,i}\tt(\sbeta_\s+\sqrt{n^{-1}\pmax\log(\pmax)}\bu_\s) \right)_+ \right\}-\Exp\left\{(1-y_i\x_{\s,i}\tt\sbeta_\s)_+\right\} \right]\notag\\
&= \infs\inf_{\|\bu_\s\|=\triangle} 2^{-1}\pmax\log(\pmax) \bu_\s\tt \Hess_\s\left(\sbeta_\s+t\sqrt{n^{-1}\pmax\log(\pmax)}\bu_\s \right)\bu_\s\notag\\
&\geq 2^{-1}\triangle^2 c_0\pmax\log(\pmax),  \label{eq:271}
\end{align}
for some $0<t<1$, where the last inequality is due to \eqref{eq:taylor} and Condition~\ref{con:1viii}.
It can be readily shown by \eqref{eq:211}, \eqref{eq:341}, \eqref{eq:271} and Condition \ref{con:1v} that when $\triangle$ is sufficiently large, $2^{-1}\triangle^2 c_0\pmax(>0)$ dominates other terms in $B_{s,n}$ . This completes the proof of Step 2.

Step 3:
Combining \eqref{eq:penalty}, \eqref{eq:222}, \eqref{eq:341} and \eqref{eq:271}, when $n$ and $\triangle$ are sufficiently large, we have
\begin{align}
&\quad\sups \inf_{\|\bu_\s\|=\triangle} \left\{ l_s\left(\sbeta_\s+\sqrt{n^{-1}p_{\max}\log(\pmax)}\bu_\s\right)-l_s(\sbeta_\s)\right\}\notag\\
&=\sups \inf_{\|\bu_\s\|=\triangle}\left\{ n^{-1}(A_{s,n}+B_{s,n})+2^{-1}\lambda_n \left\|\spbeta_\s +\sqrt{n^{-1}\pmax\log(\pmax)}\bu_\s^+\right\|^2-2^{-1}\lambda_n \|\spbeta_\s\|^2\right\}\notag\\
&\geq\sups \inf_{\|\bu_\s\|=\triangle}\Big\{ n^{-1}B_{s,n}-n^{-1}|A_{s,n}|-2^{-1}\lambda_n \Big|\left\|\spbeta_\s +\sqrt{n^{-1}\pmax\log(\pmax)}\bu_\s^+\right\|^2- \|\spbeta_\s\|^2\Big|\Big\}\notag\\
&\geq \infs \inf_{\|\bu_\s\|=\triangle}n^{-1} B_{s,n2}-\sups\sup_{\|\bu_\s\|=\triangle}n^{-1}|B_{s,n1}|-\sups\sup_{\|\bu_\s\|=\triangle}n^{-1}|A_{s,n}|\notag\\
&\quad-2^{-1}\lambda_n \sups\sup_{\|\bu_\s\|=\triangle} \Big|\left\|\spbeta_\s +\sqrt{n^{-1}\pmax\log(\pmax)}\bu_\s^+\right\|^2- \|\spbeta_\s\|^2\Big|\notag\\
&=2^{-1}n^{-1}\triangle^2 c_0\pmax\log(\pmax)- O_{p}(\triangle n^{-1} \pmax\log(\pmax))-\triangle^{3/2}n^{-1}\pmax o_{p}(1)\notag\\
&\quad -2^{-1}\triangle \lambda_n\pmax \sqrt{n^{-1}\log(\pmax)}\notag\\
&>0,
\end{align}
where the last inequality is obtained from
 { Conditions \ref{con:1viii}-\ref{con:1v} and $\lambda_n=O(\sqrt{n^{-1}\log(\pmax)})$}. This completes the proof of \eqref{eq:191}.

Equation \eqref{eq:9} can be proved in a similar way.  Note that $n-\lfloor n/J \rfloor\sim n$ and each sample from $\Dn$ is drawn independently from an identical distribution.  Hence $\tbeta^{[-j]}_\s$ converges to $\sbeta_\s$ in the same order as $\hbeta_\s$ for each $j=1,2,...,J$, i.e.,
\begin{align}
\max_{1\leq j\leq J} \sups \left\|\tbeta_\s^{[-j]}-\sbeta_\s \right\|=O_p\left(\sqrt{\frac{\pmax\log(\pmax)}{n}}\right).
 \end{align}

\end{proof}

\subsection{Proof of Theorem \ref{thm:optimality}} \label{sec:proof-of-opt}
Let us introduce Lemma \ref{lem:svm2} that facilitates the proof of Theorem \ref{thm:optimality}.
\begin{lem}\label{lem:svm2}
Assume that Condition \ref{con:xi} and
\begin{align}\label{eq:svm2-1}
\supw\left|\frac{\mathrm{CV}(\w)-R_n(\w)}{R_n(\w)}\right|=o_p(1)
\end{align}
hold. Then
\begin{align}
\frac{R_n(\hat{\w})}{\inf_{\w\in\mathcal{W}} R_n(\w)}\to 1\label{eq:lem2-opt}
\end{align}
in probability, where $\hat{\w}$ is the optimal solution from \eqref{eq:CV}.
\end{lem}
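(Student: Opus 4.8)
The plan is to run a standard ``squeeze'' argument that exploits two facts: $\hat{\w}$ is by construction the minimiser of $\CV(\w)$, whereas $\inf_{\w\in\calW}R_n(\w)$ is the value attained by the infeasible risk-optimal weight. First I would record that the feasible set $\calW$ is the unit simplex, hence compact, and that $R_n(\w)$ is continuous in $\w$, since $\hbeta(\w)$ is linear in $\w$ and the conditional expectation of the Lipschitz hinge loss is continuous. Consequently the infimum is attained, and I may write $\w^\ast=\arg\min_{\w\in\calW}R_n(\w)$, so that $R_n(\w^\ast)=\inf_{\w\in\calW}R_n(\w)$.

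Next I would convert the hypothesis \eqref{eq:svm2-1} into two-sided multiplicative bounds. Fix $\epsilon\in(0,1)$. Condition \ref{con:xi} guarantees $R_n(\w)\ge\xi_n\ge\xi_0/2>0$ for all $\w$ and all large $n$, so the ratio in \eqref{eq:svm2-1} is well defined and division by $R_n(\w)$ is legitimate. Hence the event
\[
\mathcal{E}_n=\Big\{(1-\epsilon)R_n(\w)\le\CV(\w)\le(1+\epsilon)R_n(\w)\ \textrm{ for all }\ \w\in\calW\Big\}
\]
satisfies $\Pr(\mathcal{E}_n)\to1$ directly from \eqref{eq:svm2-1}.

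Then, on $\mathcal{E}_n$, I would chain the inequalities by applying the lower bound at $\hat{\w}$, the optimality $\CV(\hat{\w})\le\CV(\w^\ast)$, and the upper bound at $\w^\ast$:
\[
(1-\epsilon)\,R_n(\hat{\w})\le\CV(\hat{\w})\le\CV(\w^\ast)\le(1+\epsilon)\,R_n(\w^\ast),
\]
so that $R_n(\hat{\w})/R_n(\w^\ast)\le(1+\epsilon)/(1-\epsilon)$. The reverse bound $R_n(\hat{\w})/R_n(\w^\ast)\ge1$ holds deterministically because $\w^\ast$ minimises $R_n$. Since $\epsilon$ is arbitrary and $\Pr(\mathcal{E}_n)\to1$, these two bounds together force $R_n(\hat{\w})/\inf_{\w\in\calW}R_n(\w)\to1$ in probability, which is exactly \eqref{eq:lem2-opt}.

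The work inside this lemma is entirely elementary, so there is no serious obstacle here; the genuine difficulty is deliberately quarantined in the hypothesis \eqref{eq:svm2-1}, namely the uniform-over-$\calW$ convergence of the cross-validation criterion to $R_n(\w)$. That is where the uniform rate from Lemma \ref{lem:consist}, the restriction on $S_n$, and Condition \ref{con:xi} (to keep the denominator bounded away from zero so the ratio does not blow up) will all be needed in the proof of the theorem. The only mild subtlety in the present argument is the well-definedness and attainment of the infimum, which I dispose of by compactness of $\calW$ and continuity of $R_n$.
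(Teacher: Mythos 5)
Your proof is correct and takes essentially the same squeeze argument as the paper: both exploit the minimality of $\hat{\w}$ for $\CV(\cdot)$ at a risk-(near-)optimal weight, together with the uniform relative convergence hypothesis \eqref{eq:svm2-1} and Condition \ref{con:xi} to keep denominators away from zero. The only cosmetic difference is that the paper sidesteps attainment of the infimum by using a near-minimizing sequence $\w_n$ with slack $\vartheta_n\to 0$, whereas you justify attainment via compactness of $\calW$ and continuity of $R_n(\w)$; both routes are valid.
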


\begin{proof}%{Lemma \ref{lem:svm2}}
By the definition of infimum, there exist a sequence $\vartheta_n$ and a vector sequence $\w_n\in\calW$ such that as $n\to\infty$, $\vartheta_n\to 0$ and
\begin{align}
\infw R_n(\w)=R_n(\w_n)-\vartheta_n.
\end{align}
From Condition \ref{con:xi}, we have
\begin{align}
\frac{R_n(\w_n)}{\infw R_n(\w)}&>\frac{\infw R_n(\w)}{\infw R_n(\w)}=1,\label{eq:svm-143}\\
\intertext{and}
\frac{\vartheta_n}{\infw R_n(\w)}&=o_p(1).\label{eq:svm-144}
\end{align}
Taking \eqref{eq:svm2-1}, \eqref{eq:svm-143} and \eqref{eq:svm-144} together, for any $\delta>0$,
\begin{align}
&\quad\Pr\left\{ \left|\frac{\inf_{\w\in\mathcal{W}} R_n(\w)}{R_n(\hat{\w})}-1\right|>\delta\right\}\notag\\
&=\Pr\left\{\frac{R_n(\hat{\w})-\inf_{\w\in\mathcal{W}} R_n(\w)}{R_n(\hat{\w})}-1>\delta\right\}\notag\\
&=\Pr\left\{\frac{R_n(\hat{\w})-\CV(\hat{\w})+\CV(\hat{\w})-R_n(\w_n)+\vartheta_n}{R_n(\hat{\w})}>\delta\right\}\notag\\
&\leq\Pr\left\{\frac{R_n(\hat{\w})-\CV(\hat{\w})+\CV(\w_n)-R_n(\w_n)+\vartheta_n}{R_n(\hat{\w})}>\delta\right\}\notag\\
&\leq\Pr\left\{\frac{|R_n(\hat{\w})-\CV(\hat{\w})|}{R_n(\hat{\w})}
+\frac{|\CV(\w_n)-R_n(\w_n)|}{\infw R_n(\w)}+\frac{\vartheta_n}{\infw R_n(\w)}>\delta\right\}\notag\\
&\leq\Pr\left\{\supw\left|\frac{R_n(\w)-\CV(\w)}{R_n(\w)}\right|
+\frac{|\CV(\w_n)-R_n(\w_n)|/R_n(\w_n)}{\infw R_n(\w)/R_n(\w_n)}+\frac{\vartheta_n}{\infw R_n(\w)}>\delta\right\}\notag\\
&\leq\Pr\Bigg\{\supw\left|\frac{R_n(\w)-\CV(\w)}{R_n(\w)}\right|
+\supw\left|\frac{\CV(\w)-R_n(\w)}{R_n(\w)}\right|\frac{R_n(\w_n)}{\infw R_n(\w)}+\frac{\vartheta_n}{\infw R_n(\w)}>\delta\Bigg\}\notag\\
&\to 0,\label{eq:48}
\end{align}
which implies that \eqref{eq:lem2-opt} is valid.
\end{proof}
\begin{proof}[Theorem \ref{thm:optimality}]
 Let
\begin{align}
T_n=\onen\sumin \left(1-y_i\x_i\tt\hbeta(\w)\right)_+
\end{align}
By Lemma \ref{lem:svm2} and the triangle inequality, it suffices to verify that
\begin{align}
\supw\frac{|\CV(\w)-T_n(\w)|}{R_n(\w)}=o_p(1),\label{eq:main1}
\end{align}
and \\
\begin{align}
\supw\frac{|T_n(\w)-R_n(\w)|}{R_n(\w)}=o_p(1).\label{eq:main2}
\end{align}

For \eqref{eq:main1}, we have
\begin{align}
|\CV(\w)-T_n(\w)|&=\left|\onen\sumj\sum_{i\in\calA(j)}\left\{(1-y_i\x_i\tt \wbeta^{[-j]}(\w))_+-(1-y_i\x_i\tt \hbeta(\w))_+\right\}\right|\notag\\
&\leq \onen\sumj\sum_{i\in\calA(j)}\left|
\int_{y_i\x_i\tt \wbeta^{[-j]}(\w)}^{y_i\x_i\tt \hbeta(\w)}I(t\leq 1)\diff t
\right|\notag\\
&\leq \onen\sumj\sum_{i\in\calA(j)}\left| y_i\x_i\tt \left(\wbeta^{[-j]}(\w)-\hbeta(\w)\right)\right|\notag\\
&\leq\onen\sumj\sum_{i\in\calA(j)}\sums w_s\|\x_{\s,i}\|\left\|\wbeta_\s^{[-j]}-\hbeta_\s\right\|\notag\\
&\leq \frac{1}{n}\sumin\sups\|\x_{\s,i}\|\max_{1\leq j\leq J}\sups\left\|\wbeta_\s^{[-j]}-\hbeta_\s\right\|\notag\\
&\leq C_1\sqrt{\pmax}\max_{1\leq j\leq J}\sups\left\|\wbeta_\s^{[-j]}-\hbeta_\s\right\|\notag\\
&=O_p\left(\frac{\pmax\sqrt{\log(\pmax)}}{\sqrt{n}}\right)\notag\\
&=o_p(1),\label{eq:|cv-ln|}
\end{align}
where the second last equality is established based on Lemma~\ref{lem:consist}, and  the last equality is based on Conditions \ref{con:1v}. Coupled with Condition~\ref{con:xi} and \eqref{eq:|cv-ln|}, we obtain \eqref{eq:main1}.

To prove \eqref{eq:main2}, note that
\begin{align}
| T_n(\w)-R_n(\w)|
&=\left|\onen\sumin \left(1-y_i\x_i\tt\hbeta(\w) \right)_+-\Exp\left\{(1-y\x\tt\hbeta(\w) )_+|\Dn\right\}\right|\notag\\
&\leq\left|\onen\sumin \left(1-y_i\x_i\tt\hbeta(\w) \right)_+-\onen\sumin \left(1-y_i\x_i\tt\sbeta(\w) \right)_+\right|\notag\\
&\quad+\left|\onen\sumin \left(1-y_i\x_i\tt\sbeta(\w) \right)_+-\Exp\left(1-y\x\tt\sbeta(\w)\right)_+\right|\notag\\
&\quad+\left|\Exp\left(1-y\x\tt\sbeta(\w)\right)_+-\Exp\left\{(1-y\x\tt\hbeta(\w))_+ |\Dn\right\}\right|\notag\\
&\equiv |\Omega_1(\w)|+|\Omega_2(\w)|+|\Omega_3(\w)|.\label{eq:20}
\end{align}
Recognising the above, Lemma~\ref{lem:consist} and Conditions \ref{con:1ii2} and \ref{con:1v}, it can be shown that
\begin{align}
 \supw|\Omega_1(\w)|
 &\leq \supw\onen\sumin \left| \left(1-y_i\x_i\tt\hbeta(\w) \right)_+- \left(1-y_i\x_i\tt\sbeta(\w) \right)_+\right|\notag\\
 %&= \supw\onen\sumin \left| \int_{y_i\x_i\tt\hbeta(\w)}^{y_i\x_i\tt\sbeta(\w)}I(t\leq 1)\diff t\right|\notag\\
 &\leq \supw\onen\sumin \left| y_i\x_i\tt\left(\sbeta(\w)-\hbeta(\w)\right)\right|\notag\\
 &\leq \supw\onen\sumin\sums w_s \|\x_{\s,i}\|\left\| \sbeta_\s-\hbeta_\s\right\|\notag\\
 &\leq  \sups\left\| \sbeta_\s-\hbeta_\s\right\|\supi\sups\|\x_{\s,i}\|\notag\\
 &=O_p\left(\frac{\pmax\sqrt{\log(\pmax)}}{\sqrt{n}}\right)\notag\\
&=o_p(1).\label{eq:Omega1}
\end{align}
%{\blue
%where the last second equality is due to Lemma~\ref{lem:consist} and the fact
%\begin{align*}
%  \sups\onen\sumin\|\x_{\s,i}\|&\leq \sups \onen\sumin\Exp\|\x_{\s,i}\|\\
%  &\quad +\sups \left|\onen\sumin\left(\|\x_{\s,i}\|-\Exp\|\x_{\s,i}\|\right) \right|\\
%  &=O(\sqrt{p})+o_P(1)\\
%  &=O_P(\sqrt{p}).
%\end{align*}
%}
%{\blue
%For each fixed $\w\in\calW$, we have
%\begin{align*}
%\Exp\{\Omega_2(\w)\}&=\Exp\left\{\onen\sumin \left(1-y_i\x_i\tt\sbeta(\w) \right)_+-\Exp\left(1-y\x\tt\sbeta(\w)\right)_+\right\}=0,\\
%\mathbb{V}\{\Omega_2(\w)\}&=\onen\Var\left\{\left(1-y\x\tt\sbeta(\w) \right)_+\right\}\\
%&\leq \onen \Exp\left\{1-y\x\tt\bbeta^*(\w)\right\}^2\\
%&= \onen\Exp\left\{1-2y\x\tt\sbeta(\w)+y^2\x\tt\sbeta(\w) \bbeta^{*T}(\w)\x\right\}\\
%%&\leq \onen\left\{1+\sbeta(\w)\Exp\|\x\|+\|\sbeta(\w)\|^2\Exp\|\x\|^2\right\}\\
%&\leq \onen \left\{1+2\sups\|\sbeta_\s\|\Exp\|\x_\s\|+\sups\|\sbeta_\s\|^2\Exp\|\x_\s\|^2 \right\}\\
%&\leq O\{n^{-1}\left(1+2\pmax+\pmax^2\right)\}\\
%&=o(1).\label{eq:var}
%\end{align*}}
%{\blue Consider the class of functions
%\begin{align}
%\mathcal{G}=\{g(\cdot,\cdot;\w):\w\in\calW\}
%\end{align}
%where $g(\cdot,\cdot; \w):\mathbb{R}\times \mathbb{R}^{p}\to\mathbb{R}^+$ is defined by $g(y_i,\x_i; \w)= \left(1-y_i\x_i\tt\sbeta(\w) \right)_+-\Exp\left(1-y\x\tt\sbeta(\w)\right)_+$.}
{
 Define
\begin{align}
|\w-\w'|_1=\sums|w_s-w'_s|,
\end{align}
for any $\w=(w_1,...,w_{S_n})\in\calW$ and $\w'=(w'_1,...,w'_{S_n})\in\calW$.}
% It is easy to see the $\epsilon-$covering number of $\calW$ with respect to $|\cdot|_1$ is given by $\mathcal{N}(\epsilon,\calW,|\cdot|_1)=O(1/\epsilon^{S_n-1})$. By Theorem 2.7.11 in Van der Vaart and Wellner (1996).
Let $h_n=1/(
\pmax\log n)$ and create grids using regions of the form $\calW^{(l)}=\{\w:|\w-\w^{(l)}|_1\leq h_n\}$.
By the notion of the $\epsilon-$covering number introduced by \cite{vaart1996weakconvergence}, $\calW$ can be covered with $N=O(1/h_n^{S_n-1})$ regions $\calW^{(l)}$, $l=1,...,N.$

Note that
\begin{align}
&\quad\sup_{\w\in\calW^{(l)}}|\Omega_2(\w)-\Omega_2(\w^{(l)})|\notag\\
&\leq\sup_{\w\in\calW^{(l)}}\left|\onen\sumin \left(1-y_i\x_i\tt\sbeta(\w) \right)_+-\onen\sumin \left(1-y_i\x_i\tt\sbeta(\w^{(l)}) \right)_+\right|\notag\\
&\quad+\sup_{\w\in\calW^{(l)}}\left|\Exp\left(1-y\x\tt\sbeta(\w)\right)_+-\Exp\left(1-y\x\tt\sbeta(\w^{(l)})\right)_+\right|\notag\\
&\leq\sup_{\w\in\calW^{(l)}}\onen\sumin\left|y_i\x_i\tt\{\sbeta(\w^{(l)})-\sbeta(\w)\} \right|+\sup_{\w\in\calW^{(l)}}\Exp\left|y\x\tt\{\sbeta(\w^{(l)})-\sbeta(\w)\} \right|\notag\\
&\leq\sup_{\w\in\calW^{(l)}}\onen\sumin \sums|w_s-w^{(l)}_s|\left|\x_{\s,i}\tt\sbeta_\s\right|+\sup_{\w\in\calW^{(l)}}\sums|w_s-w^{(l)}_s|\Exp\left|\x_{\s,i}\tt\sbeta_\s\right|\notag\\
&=\sup_{\w\in\calW^{(l)}}|\w-\w^{(l)}|_1\sups\|\sbeta_\s\|\left(\supi\sups \|\x_{\s,i}\|+
\supi\sups\Exp\|\x_{\s,i}\| \right)\notag\\
&\leq  \frac{C_2\sqrt{\pmax}}{\pmax\log(n)}2C_1\sqrt{\pmax}\notag\\
&=O_p\left( \log^{-1}(n)\right)\notag\\
&=o_p(1),
\end{align}
where the result holds uniformly for $j$.
%Here we have used the fact that
%\begin{align}
%\sups\onen\sumin \|\x_{\s,i}\|&\leq \onen\sumin \|\x_{i}\|\notag\\
%&\leq  \onen\sumin \Exp\|\x_{i}\|+\left|\onen\sumin\left(\|\x_{i}\|-\Exp\|\x_{i}\|\right)\right|\notag\\
%&=O(\sqrt{p})+o_P(1)=O_P(\sqrt{p}).\label{eq:25}
%\end{align}
Hence we have
\begin{align}
\supw|\Omega_2(\w)|&=\supjN\supjw|\Omega_2(\w)|\notag\\
&\leq \supjN|\Omega_2(\w^{(l)})|+\supjN\supjw|\Omega_2(\w)-\Omega_2(\w^{(l)})|\notag\\
&=\supjN|\Omega_2(\wj)|+o_p(1).\label{eq:26}
\end{align}
Furthermore, for any $\epsilon>0$,
\begin{align}
&\quad\Pr\left\{\supjN |\Omega_2(\wj)|> 3\epsilon\right\}\notag\\
&=\Pr\Bigg[\supjN\Big|\onen\sumin\left(1-y_i\x_i\tt\sbeta(\wj)\right)_+\mathbf{1}\left(|1-y_i\x_i\tt\sbeta(\wj)|<\pmax n^{0.1}\right)\notag\\
&\quad+\onen\sumin\left(1-y_i\x_i\tt\sbeta(\wj)\right)_+\mathbf{1}\left(|1-y_i\x_i\tt\sbeta(\wj)|\geq \pmax n^{0.1}\right)\notag\\
&\quad-\Exp\left\{(1-y\x\tt\sbeta(\wj))_+\mathbf{1}\left(|1-y\x\tt\sbeta(\wj)|< \pmax n^{0.1}\right)\right\}\notag\\
&\quad-\Exp\left\{(1-y\x\tt\sbeta(\wj))_+\mathbf{1}\left(|1-y\x\tt\sbeta(\wj)|\geq \pmax n^{0.1}\right)\right\}\Big|> 3\epsilon\Bigg]\notag\\
&\leq \Pr\Bigg[\supjN\Big|\onen\sumin\left(1-y_i\x_i\tt\sbeta(\wj)\right)_+\mathbf{1}\left(|1-y_i\x_i\tt\sbeta(\wj)|<\pmax n^{0.1}\right)\notag\\
&\quad -\Exp\left\{(1-y\x\tt\sbeta(\wj))_+\mathbf{1}\left(|1-y\x\tt\sbeta(\wj)|< \pmax n^{0.1}\right)\right\}\Big|>\epsilon\Bigg]\notag\\
&\quad+\Pr\Bigg[\supjN\onen\sumin\left(1-y_i\x_i\tt\sbeta(\wj)\right)_+\mathbf{1}\left(|1-y_i\x_i\tt\sbeta(\wj)|\geq \pmax n^{0.1}\right)>\epsilon\Bigg]\notag\\
&\quad+\Pr\Bigg[\supjN\Exp\left\{(1-y\x\tt\sbeta(\wj))_+\mathbf{1}\left(|1-y\x\tt\sbeta(\wj)|\geq \pmax n^{0.1}\right)\right\}> \epsilon\Bigg]\notag\\
&\equiv \Xi_1+\Xi_2+\Xi_3.\label{eq:27}
\end{align}
Clearly,
\begin{align}
&\quad\sumin\Exp\left\{(1-y_i\x_i\tt\sbeta(\wj))_+\right\}^2\notag\\
&\leq \sumin\Exp\left|1-y_i\x_i\tt\sbeta(\wj)\right|^2\notag\\
&\leq \sumin\Exp\left(1+2|\x_i\tt\sbeta(\w^{(l)})|+\x_i\tt\sbeta(\w^{(l)}) \bbeta^{*\text{T}}(\w^{(l)})\x_i\tt\right)\notag\\
&\leq \sumin\Exp\left(1+2\sups \|\x_{\s,i}\|\|\sbeta_\s\|+\sups\|\sbeta_\s\|^2\|\x_{\s,i}\|^2 \right)\notag\\
&\leq 4C^2_1C^2_2 n p^2_{\max}.\label{eq:471}
\end{align}
Using Boole's and Bernstein's inequalities and by \eqref{eq:471},
\begin{align}
\Xi_1&\leq\sum_{j=1}^N\Pr\Bigg[\Big|\onen\sumin\left(1-y_i\x_i\tt\sbeta(\wj)\right)_+\mathbf{1}\left(|1-y_i\x_i\tt\sbeta(\wj)|<\pmax n^{0.1}\right)\notag\\
&\quad -\Exp\left\{(1-y\x\tt\sbeta(\wj))_+\mathbf{1}\left(|1-y\x\tt\sbeta(\wj)|< \pmax n^{0.1}\right)\right\}\Big|>\epsilon\Bigg]\notag\\
&\leq N\exp\left(- \frac{n^2\epsilon^2/2}{4C_1^2C_2^2 np_{\max}^2+\epsilon \pmax n^{0.1}/3}\right)\notag\\
&\leq (\pmax\log n)^{S_n-1}\exp\left(- \frac{n^2\epsilon^2/2}{4C_1^2C_2^2 np_{\max}^2+\epsilon \pmax n^{0.1}/3}\right)\notag\\
&=O\left\{\exp\left(-\epsilon^2 n p^{-2}_{\max}+ S_n \log(\pmax)+S_n\log \log(n)\right)\right\}\notag\\
%&=\exp\left(- \frac{n\epsilon^2/2}{4n\pmax^2+\epsilon S_n\pmax n^{0.1}/3}+(S_n-1)\pmax\log(n)\right)\notag\\
&=o(1),\label{eq:29}
\end{align}
where the last equality is established from Condition~\ref{con:1v} and {the condition that $S_n=O(n^\tau)$ for $\tau\in (0,1-2\kappa)$.}
Additionally, we can write
\begin{align}
\Xi_2&=\Pr\Bigg\{\supjN\onen\sumin\left(1-y_i\x_i\tt\sbeta(\wj)\right)_+\mathbf{1}\left(|1-y_i\x_i\tt\sbeta(\wj)|\geq \pmax n^{0.1}\right)>\epsilon\Bigg\}\notag\\
&\leq \Pr\left(\supjN\supi|1-y_i\x_i\tt\sbeta(\wj)|\geq \pmax n^{0.1}\right)\notag\\
&\leq\Pr\left\{\supjN\supi\sums{w^{(l)}_s}\left(1+\|\x_{\s,i}\|\|\sbeta_\s\|\right)\geq \pmax n^{0.1}\right\}\notag\\
&\leq \Pr\left\{\left(1+\supi\sups\|\x_{\s,i}\|\sups\|\sbeta_\s\|\right)\geq \pmax n^{0.1}\right\}\notag\\
%&\leq \Pr\left(1+O_P(\pmax)\geq \pmax n^{0.1}\right)\notag\\
&= o(1),\label{eq:30}
\end{align}
where the last inequality holds because of Conditions \ref{con:1ii} and \ref{con:1ii2}. Similarly,
\begin{align}
\Xi_3&=\Pr\left[\supjN\Exp\left\{(1-y\x\tt\sbeta(\wj))_+\mathbf{1}\left(|1-y\x\tt\sbeta(\wj)|\geq \pmax n^{0.1}\right)\right\}> \epsilon\right]\notag\\
&\leq \Pr\left(\supjN\Exp|1-y\x\tt\sbeta(\wj)|\geq \pmax n^{0.1}\right)\notag\\
&\leq \Pr\left\{\left(1+\sups\Exp\|\x_\s\| \sups\|\sbeta_\s\|\right)\geq\pmax n^{0.1}\right\}\notag\\
%&\leq \Pr\Bigg[1+O_P(p)\geq p\sqrt{n}\Bigg]\notag\\
&=o(1).\label{eq:31}
\end{align}
Together with \eqref{eq:27}, \eqref{eq:29}-- \eqref{eq:31}, we obtain $\supjN|\Omega_2(\wj)|=o_p(1)$.  As well, by \eqref{eq:26}, we have
\begin{align}
\supw|\Omega_2(\w)|=o_P(1).\label{eq:312}
\end{align}

Finally, note that $(y,\x)$ and $(\tilde{y}, \tilde{\x})$ are independently and identically distributed, and under Lemma~\ref{lem:consist}, we have
\begin{align}
 \supw|\Omega_3(\w)|
 &=\supw\left|\Exp\left(1-y\x\tt\sbeta(\w)\right)_+-\Exp\left\{(1-\tilde{y}\tilde{\x}\tt\hbeta(\w)) |\Dn\right\}_+\right|\notag\\
 &=\supw\left|\Exp\left(1-\tilde{y}\tilde{\x}\tt\sbeta(\w)\right)_+-\Exp\left\{(1-\tilde{y}\tilde{\x}\tt\hbeta(\w)) |\Dn\right\}_+\right|\notag\\
 &= \supw\left|
 \Exp\int_{\tilde{y}\tilde{\x}\tt\sbeta(\w)}^{\tilde{y}\tilde{\x}\tt\hbeta(\w)}I(t\leq 1)\diff t\bigg|\Dn\right|\notag\\
 &\leq \supw\Exp\left\{\left|\tilde{y}\tilde{\x}\tt\left(\hbeta(\w)-\sbeta(\w)\right) \right| \big|\Dn\right\}\notag\\
 &\leq \supw \sums w_s\Exp \left\{\left|\tilde{\x}_\s\tt\left(\hbeta_\s-\sbeta_\s\right)\right|\big|\Dn\right\}\notag\\
 &\leq \sups \left\|\hbeta_\s-\sbeta_\s \right\|\sups\Exp\|\tilde{\x}_{\s,i}\|\notag\\
 &=O_p\left(\frac{ \pmax\sqrt{\log(\pmax)}}{\sqrt{n}}\right)\notag\\
  &=o_p(1),\label{eq:II3}
\end{align}
where the last inequality holds due to Condition \ref{con:1v}. Putting \eqref{eq:20}, \eqref{eq:Omega1}, \eqref{eq:312} and \eqref{eq:II3} together, we complete the proof of \eqref{eq:main2}.

\end{proof}

\vskip 0.2in
\bibliography{mybib,MA_SIM_ref2}

\end{document}